\documentclass[preprint,12pt]{elsarticle}

\usepackage[a4paper,margin=1in]{geometry}
\usepackage{amsmath,amssymb,amsthm}
\usepackage{booktabs}
\usepackage{graphicx}
\usepackage[hidelinks]{hyperref}
\usepackage{microtype}
\usepackage{algorithm}
\usepackage{algpseudocode}
\usepackage{tikz}
\usetikzlibrary{positioning}

\newtheorem{definition}{Definition}
\newtheorem{lemma}{Lemma}
\newtheorem{proposition}{Proposition}
\newtheorem{theorem}{Theorem}

\begin{document}

\begin{frontmatter}
\title{An Efficient Algorithm for the Quickest Path Reliability Problem}

\author[inst1]{Majid Forghani-elahabad}

\affiliation[inst1]{organization={Center of Mathematics, Computing, and Cognition - Federal University of ABC},
            city={Santo André},
            postcode={09280-560}, 
            state={SP},
            country={Brazil}}

\date{}

\begin{abstract}
The quickest path reliability problem is to compute the probability that a given flow is transmitted along a single source-destination path within a given time limit. When a budget is imposed, path feasibility is jointly dependent on lead time, transmission cost, and capacity. Existing exact methods may require enumeration of all minimal paths a priori or examination of many partial paths that cannot be completed feasibly. In this paper, we propose an exact algorithm that generates the corresponding lower boundary state vectors without first enumerating minimal paths. Shortest-lead-time, least-cost, and widest-capacity information yield admissible destination-based bounds. Stronger capacity-aware bounds are derived from capacity-filtered subnetworks. These bounds allow early pruning of infeasible branches while keeping all solutions. The algorithm generates the exact complete set of lower bounds without any duplicates, and its correctness and computational complexity are proven. 
\end{abstract}

\begin{keyword}
Quickest path reliability problem; Multistate flow network; Lower boundary vector; Destination-based bound; Capacity-aware look-ahead.
\end{keyword}

\end{frontmatter}

\section{Introduction}

\section{Introduction}
\label{sec:introduction}

Many systems encountered in practice can be represented as flow networks, in which a resource must be transmitted from an origin to a destination through a collection of interconnected components~\cite{ahuja1993network, forghani2015ieee}. Depending on the application, the transmitted resource may represent data, commodities, vehicles, electrical energy, or production materials. Flow-network models have therefore been employed in communication and computer networks~\cite{PANG2026112670,yeh2021novel, forghani2021chapter}, transportation and logistics systems~\cite{niu2024expected,xiahou2023reliability}, power transmission and distribution systems~\cite{lin2022reliability, forghani2015ress}, manufacturing processes~\cite{hao2020general,huang2024manufacturing}, and supply chains~\cite{xu2022multistate}. In such systems, component capacities are rarely fixed permanently. Failures, maintenance, congestion, environmental conditions, and operational decisions may cause them to assume different values over time~\cite{forghani2023usage,niu2017new,kozyra2024parallel, forghani2020assa}.

A multistate flow network (MFN) represents this uncertainty by allowing each arc to operate at several capacity levels, each occurring according to a specified probability distribution~\cite{niu2020finding-MP,forghani2019ress,SAMIE2027113089}. One of the main measures used to assess the performance of an MFN is its two-terminal reliability. In its basic form, this measure is the probability that a required amount of flow can be transmitted from a designated source node to a designated destination node~\cite{shier1991network}. In many applications, however, the amount that can be transmitted is not the only relevant consideration. The transmission must also be completed within an acceptable period of time~\cite{forghani20223, forghani2014ieee}.

For a deterministic network, this requirement leads to the quickest path problem. Given a demand, an arc capacity, and an arc lead time, the objective is to find a source--destination path through which the demand can be transmitted in the shortest total time~\cite{moore1976fastest,chen1990quickest}. The problem differs from the ordinary shortest path problem because the transmission time depends on two characteristics of the selected path: the sum of its arc lead times and its bottleneck capacity. Consequently, the path having the smallest total lead time need not be the path that completes the transmission first. This particular structure has motivated several solution procedures, including capacity-filtered subnetworks, transformed-network methods, and constrained path formulations~\cite{calvete2012algorithms,ghiani2014lower,sedeno2014fast}.

When the arc capacities are stochastic, the transmission time becomes a random quantity as well~\cite{HUANG2027113228}. Lin~\cite{lin2003extend} introduced the quickest path reliability problem in an MFN and defined the corresponding reliability as the probability that a prescribed demand can be transmitted through a single minimal path within a specified time limit. The proposed exact procedure begins by enumerating all minimal paths. For each path that can satisfy the time requirement, it determines the minimum necessary bottleneck capacity and constructs the associated lower boundary state vector. The required reliability is then obtained from the probability of the union of the corresponding upper-state events. This formulation established an important connection between the deterministic quickest path problem and multistate network reliability. Its computational burden, however, may be considerable because the number of minimal paths can grow exponentially with the network size~\cite{forghani20223, yeh2024efficient-bdcc, forghani2017ieee}.

Several later studies sought to reduce or avoid the need for preliminary minimal-path enumeration. Yeh et al.~\cite{yeh2009simpleQPR} used a $k$-shortest-path procedure together with subnetworks induced by different capacity levels. Yeh and El Khadiri~\cite{yeh2012new} subsequently introduced an algebraic solution based on universal generating functions. A depth-first-search approach was later proposed by Yeh~\cite{yeh2015fast}. In that method, source--destination paths are constructed incrementally, and a partial path is abandoned when its accumulated lead time is already too large to satisfy the prescribed time limit. These procedures represented an important development because they no longer required the complete minimal-path set to be supplied beforehand. Nevertheless, a search may still explore many partial paths whose infeasibility becomes apparent only after they have been extended close to the destination.

The construction of the lower boundary vectors is only one part of exact quickest path reliability evaluation. Once these vectors have been obtained, the associated upper-state events generally overlap. Their union probability must therefore be calculated by an exact reliability-evaluation method, such as inclusion--exclusion, the sum of disjoint products, or state-space decomposition~\cite{balan2003preprocessing,schafer2018simplification, mansourzadeh2014comparative}. El Khadiri et al.~\cite{elkhadiri2023factoring} proposed an alternative exact method based on recursive factoring and the total expectation theorem. This approach avoids the explicit generation of all relevant paths and can be advantageous when a large collection of overlapping path events would otherwise have to be processed.

Approximation and simulation procedures provide another possibility when exact probability evaluation becomes excessively demanding. El Khadiri and Yeh~\cite{el2016efficient} developed an unbiased Monte Carlo estimator for quickest path reliability. More recent procedures estimate the reliability by sampling network-capacity states and have also considered extensions in which the arc lead times are stochastic~\cite{huang2026montecarlo,huang2026stochastic}. Such methods are useful for large-scale problems, especially when an approximate reliability value is sufficient. The present study has a different objective: it focuses on the exact generation of the complete lower boundary set associated with a prescribed demand, time limit, and budget.

The inclusion of transmission cost makes the problem more representative of applications in which time and expenditure must be controlled simultaneously. Lin~\cite{lin2010system} extended the quickest path reliability model by imposing a budget constraint and constructed the lower boundary vectors associated with paths satisfying both restrictions. Forghani-elahabad and Yeh~\cite{forghani2022improvedress} later showed that the time and budget conditions can be applied earlier to exclude unsuitable minimal paths before their state vectors are formed. Despite these improvements, both procedures require the complete collection of minimal paths as an input.

To eliminate this requirement, Forghani-elahabad and Alsalami~\cite{forghani2023using} developed a backtracking method based on the node--child matrix. Instead of enumerating all minimal paths in a separate preliminary phase, their algorithm constructs simple paths progressively while monitoring accumulated lead time, accumulated transmission cost, and bottleneck capacity. Infeasible partial paths can consequently be eliminated during the search. This method provides a suitable basis for the direct generation of lower boundary vectors, but its pruning decisions rely mainly on the resources already consumed by the current partial path. They do not fully account for the additional lead time and cost that will necessarily be required to reach the destination. Hence, a partial path can satisfy all current tests even when none of its possible completions is feasible.

A different recent contribution was made by Kozyra~\cite{kozyra2026dictionary}, who proposed a dictionary method for a generalized model with possibly noninteger capacities and correlated component states. For a fixed demand and budget, the method efficiently identifies the attainable transmission times and their corresponding lower boundary vectors. It is particularly useful when results are required for several time limits. However, the complete set of minimal paths must be available before the method is applied. Thus, when a single prescribed combination of demand, time limit, and budget is considered, improving the direct construction of feasible paths without preliminary minimal-path enumeration remains a relevant problem.

The central idea of this paper is to make the search look beyond the resources already consumed. For each node, we determine optimistic destination-based values for the additional lead time and cost required to complete a partial path. These values are used as admissible lower bounds rather than as prescribed continuations of that path. A partial path can therefore be discarded as soon as its accumulated resource consumption, combined with the least resource consumption still required, exceeds one of the given limits.

This idea is further strengthened by computing the destination-based bounds in subnetworks induced by different capacity levels. The resulting tests account jointly for the additive lead-time and cost restrictions and for the bottleneck capacity governing the transmission time. If none of the remaining capacity levels can provide a feasible completion, the current partial path can be safely eliminated. In this manner, infeasibility may be detected considerably earlier in the path-construction process without excluding any lower boundary vector.

The principal contributions of the paper can be summarized as follows:
\begin{itemize}
    \item Destination-based lower bounds are derived for the additional lead time and transmission cost needed to extend a partial path to the destination.
    
    \item The bounds are strengthened by means of capacity-filtered subnetworks, leading to a capacity-aware feasibility test that captures the interaction among remaining lead time, remaining budget, and attainable bottleneck capacity.
    
    \item The resulting conditions are incorporated into an exact resource-aware look-ahead algorithm that directly generates the complete set of $(d,T,b)$ lower boundary vectors, without requiring prior enumeration of all minimal paths and without producing duplicate solutions.
    
    \item The safety of the proposed pruning rules and the soundness, completeness, and correctness of the algorithm are established formally. Output-sensitive time and memory analyses are also developed by distinguishing the costs of preprocessing, path enumeration, and explicit solution construction.
\end{itemize}

The remainder of the paper is organized as follows. Section~\ref{sec:preliminaries} presents the network model, assumptions, notation, and preliminary results required for the subsequent developments. Section~\ref{sec:algorithm} derives the destination-based and capacity-aware conditions, describes the proposed resource-aware look-ahead algorithm, and analyzes its theoretical properties and computational complexity. Section~\ref{sec:example} illustrates the successive steps of the algorithm using a benchmark network. Finally, Section~\ref{sec:conclusion} summarizes the main findings and discusses possible directions for future research.

\section{Problem Formulation and Preliminaries}
\label{sec:preliminaries}

This section introduces the network model, assumptions, and notation used throughout the paper. We then formulate the time- and budget-constrained quickest path reliability problem, characterize its lower boundary state vectors, and briefly explain how the network reliability can be calculated once these vectors are available.

\subsection{Network Model and Notation}
\label{subsec:network_model}

Consider a multistate flow network represented by
\[
G=(N,A,\boldsymbol{M},\boldsymbol{\ell},\boldsymbol{c}),
\]
where \(N=\{1,2,\ldots,n\}\) is the node set and
\(A=\{a_1,a_2,\ldots,a_m\}\) is the arc set. Nodes \(1\) and \(n\) are respectively designated as the source and destination nodes. The network may be directed or undirected. In the undirected case, an arc can be traversed in either direction, subject to the condition that a path cannot visit the same node more than once.

The maximum-capacity vector is denoted by
\[
\boldsymbol{M}=(M_1,M_2,\ldots,M_m),
\]
where \(M_i\) is the maximum capacity of arc \(a_i\). The deterministic lead-time and unit-cost vectors are respectively denoted by
\[
\boldsymbol{\ell}=(\ell_1,\ell_2,\ldots,\ell_m)
\quad\text{and}\quad
\boldsymbol{c}=(c_1,c_2,\ldots,c_m),
\]
where \(\ell_i\) is the lead time of arc \(a_i\), and \(c_i\) is the cost of transmitting one unit of flow through that arc.

The current capacity of arc \(a_i\) is represented by the random variable \(x_i\), whose possible values belong to
\[
\mathcal{S}_i=\{0,1,\ldots,M_i\}.
\]
The system state vector is consequently
\[
\boldsymbol{x}=(x_1,x_2,\ldots,x_m),
\qquad
0\leq x_i\leq M_i.
\]
Let
\[
p_{i,k}=\Pr(x_i=k),
\qquad
k=0,1,\ldots,M_i,
\]
denote the probability that arc \(a_i\) operates at capacity level \(k\).

For two state vectors \(\boldsymbol{x}\) and \(\boldsymbol{y}\), the componentwise relation
\[
\boldsymbol{x}\leq\boldsymbol{y}
\]
means that \(x_i\leq y_i\) for every \(i=1,\ldots,m\). Moreover,
\(\boldsymbol{x}<\boldsymbol{y}\) means that
\(\boldsymbol{x}\leq\boldsymbol{y}\) and
\(\boldsymbol{x}\neq\boldsymbol{y}\). The componentwise maximum of the two vectors is denoted by
\[
\boldsymbol{x}\vee\boldsymbol{y}
=
\bigl(\max\{x_1,y_1\},\ldots,\max\{x_m,y_m\}\bigr).
\]
This notation is extended naturally to any finite collection of state vectors.

A source--destination path \(P\) is a sequence of adjacent arcs connecting node \(1\) to node \(n\). Whenever no confusion can arise, the same symbol is used for a path and for its associated ordered arc set. A path is called a minimal path if no proper subset of its arcs still contains a path from node \(1\) to node \(n\). The set of all minimal paths (MPs) is denoted by
\[
\mathcal{P}=\{P_1,P_2,\ldots,P_h\},
\]
where \(h=|\mathcal{P}|\) is the number of MPs. The proposed algorithm will not require \(\mathcal{P}\) to be generated in advance.

For a path \(P\), its lead time and unit transmission cost are respectively defined as
\begin{equation}
L(P)=\sum_{a_i\in P}\ell_i
\label{eq:path_lead_time}
\end{equation}
and
\begin{equation}
C(P)=\sum_{a_i\in P}c_i.
\label{eq:path_unit_cost}
\end{equation}
Under the system state \(\boldsymbol{x}\), the available capacity of \(P\) is its bottleneck capacity:
\begin{equation}
K(P,\boldsymbol{x})
=
\min_{a_i\in P}x_i.
\label{eq:path_capacity}
\end{equation}

The main parameters of the problem are the demand \(d\), the time limit \(T\), and the available budget \(b\). In particular, \(d\) is the amount of flow that must be transmitted from node \(1\) to node \(n\), \(T\) is the maximum allowable transmission time, and \(b\) is the maximum allowable transmission cost. The demand and time limit are assumed to be positive integers, and the available budget is nonnegative.

\subsection{Assumptions}
\label{subsec:assumptions}

The following assumptions, commonly adopted in the quickest path reliability literature, are considered throughout this paper
\cite{lin2003extend,lin2010system,forghani2022improvedress,forghani2023using}.

\begin{enumerate}
    \item The capacity \(X_i\) of every arc \(a_i\) is a discrete random variable taking an integer value in
    \(\{0,1,\ldots,M_i\}\), and its probability distribution is known.

    \item The capacities of different arcs are statistically independent.


    \item The maximum capacity \(M_i\) is a positive integer, the lead time \(\ell_i\) is a nonnegative integer, and the unit transmission cost \(c_i\) is nonnegative for every arc \(a_i\in A\). These quantities are known and deterministic.

    \item The network satisfies the flow-conservation law. Hence, no intermediate node generates, consumes, or stores flow.

    \item All \(d\) units of flow are transmitted through a single source--destination path. Flow splitting among multiple paths is not permitted.

    \item Every node is perfectly reliable. An unreliable node, if present, can be replaced by two perfectly reliable nodes connected by an arc whose capacity distribution represents the state of the original node~\cite{forghani2016amm}.


\end{enumerate}

The independence assumption is needed for the probability calculations presented in Subsection~\ref{subsec:reliability_boundary}. In contrast, the boundary-vector generation procedure developed later depends only on the possible arc-capacity levels and remains valid independently of their probabilities.

\subsection{Transmission Time and Feasible Paths}
\label{subsec:feasible_paths}

Suppose that \(d\) units of flow are transmitted through a path \(P\) under system state \(\boldsymbol{x}\). If \(K(P,\boldsymbol{x})=0\), the path cannot transmit any positive amount of flow. Otherwise, the total cost of sending the demand through \(P\) is
\begin{equation}
g(d,P)=dC(P)
      =d\sum_{a_i\in P}c_i.
\label{eq:transmission_cost}
\end{equation}
Thus, under the adopted cost model, the path capacity does not affect the transmission cost.

The transmission time, however, depends on both the path lead time and its current capacity. More precisely, the time required to send \(d\) units of flow through \(P\) under state \(\boldsymbol{x}\) is
\begin{equation}
f(d,\boldsymbol{x},P)
=
\begin{cases}
L(P)+\displaystyle\left\lceil
\frac{d}{K(P,\boldsymbol{x})}
\right\rceil,
& K(P,\boldsymbol{x})>0,\\[3mm]
+\infty,
& K(P,\boldsymbol{x})=0,
\end{cases}
\label{eq:transmission_time}
\end{equation}
where \(\lceil \bullet \rceil\) is the smallest integer number not less than \(\bullet\). The first term in \eqref{eq:transmission_time} is the time required for the first transmitted units to traverse the path, whereas the second term represents the number of transmission batches required when at most \(K(P,\boldsymbol{x})\) units can be sent at a time. Consequently, a path with the smallest lead time is not necessarily the path with the smallest transmission time.

\begin{definition}
\label{def:feasible_path}
For a given demand \(d\), time limit \(T\), budget \(b\), and system state \(\boldsymbol{x}\), a source--destination path \(P\) is called \emph{\((d,T,b)\)-feasible} if
\begin{equation}
g(d,P)\leq b
\quad\text{and}\quad
f(d,\boldsymbol{x},P)\leq T.
\label{eq:path_feasibility}
\end{equation}
\end{definition}

For a path \(P\) satisfying \(L(P)<T\), let
\begin{equation}
q(P)=T-L(P)
\label{eq:remaining_transmission_time}
\end{equation}
denote the time available for transmitting the demand after accounting for the lead time. The minimum path capacity required to complete the transmission within \(T\) is then
\begin{equation}
\alpha(P)
=
\left\lceil
\frac{d}{T-L(P)}
\right\rceil.
\label{eq:required_path_capacity}
\end{equation}

\begin{proposition}
\label{prop:path_feasibility}
Let \(P\) be a source--destination path. The path \(P\) is
\((d,T,b)\)-feasible under state \(\boldsymbol{x}\) if and only if
\begin{equation}
L(P)<T,\qquad
dC(P)\leq b,\qquad
K(P,\boldsymbol{x})\geq\alpha(P).
\label{eq:equivalent_feasibility}
\end{equation}
\end{proposition}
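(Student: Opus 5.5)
The plan is to prove the two directions separately, using only Definition~\ref{def:feasible_path} together with \eqref{eq:transmission_cost} and \eqref{eq:transmission_time}. The budget condition needs no work: \(g(d,P)\leq b\) is literally \(dC(P)\leq b\) by \eqref{eq:transmission_cost}. So everything reduces to showing that \(f(d,\boldsymbol{x},P)\leq T\) holds if and only if both \(L(P)<T\) and \(K(P,\boldsymbol{x})\geq\alpha(P)\) hold.

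\emph{Forward direction.} Assume \(f(d,\boldsymbol{x},P)\leq T\).
\begin{itemize}
\item Since \(T\) is finite, the second case of \eqref{eq:transmission_time} is excluded, so \(K:=K(P,\boldsymbol{x})>0\) and
\[
L(P)+\lceil d/K\rceil\leq T .
\]
\item Because \(d\geq 1\) and \(K\geq 1\), we have \(\lceil d/K\rceil\geq 1\). Hence \(L(P)\leq T-1<T\), which is the first condition and also makes \(\alpha(P)\) well defined.
\item Set \(q=T-L(P)\), a positive integer since all lead times and \(T\) are integers. The inequality now reads \(\lceil d/K\rceil\leq q\).
\item For an integer \(q\), \(\lceil y\rceil\leq q\) is equivalent to \(y\leq q\). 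So \(d/K\leq q\), which gives \(K\geq d/q\).
\item Since \(K\) is an integer, \(K\geq\lceil d/q\rceil=\alpha(P)\).
\end{itemize}

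\emph{Reverse direction.} Assume the three conditions in \eqref{eq:equivalent_feasibility}.
\begin{itemize}
\item From \(L(P)<T\), \(q=T-L(P)\) is a positive integer.
\item From \(d>0\), \(\alpha(P)\geq 1\), so \(K\geq\alpha(P)\geq 1\) and the first case of \eqref{eq:transmission_time} applies.
\item Since \(K\geq\lceil d/q\rceil\geq d/q\), we get \(d/K\leq q\).
\item Because \(q\) is an integer, \(\lceil d/K\rceil\leq q\). Therefore \(f(d,\boldsymbol{x},P)=L(P)+\lceil d/K\rceil\leq T\).
\end{itemize}
Combined with the budget equivalence, this shows \(P\) is \((d,T,b)\)-feasible.

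\emph{Main obstacle.} The only delicate step is the ceiling manipulation. It relies on two integrality facts: that \(T-L(P)\) is an integer (from the assumptions that \(T\) and every \(\ell_i\) are integers), and that \(K\) is an integer (from the integer capacity levels). I would state the standard fact explicitly: for real \(y\) and integer \(q\), \(\lceil y\rceil\leq q\) if and only if \(y\leq q\). I would also point out that the strictness of \(L(P)<T\) comes from \(\lceil d/K\rceil\geq 1\), which uses the positivity of \(d\).
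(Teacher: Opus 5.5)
Your proof is correct and follows the same route as the paper: you split off the budget condition, reduce $f(d,\boldsymbol{x},P)\leq T$ to $\lceil d/K(P,\boldsymbol{x})\rceil\leq T-L(P)$, and invert the ceiling using integrality of $T-L(P)$. You are more explicit than the paper on three points it leaves implicit: the $K(P,\boldsymbol{x})=0$ case, why $L(P)<T$ is strict (because $\lceil d/K\rceil\geq 1$), and the use of integrality of $K(P,\boldsymbol{x})$ in the final step.
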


\begin{proof}
If \(L(P)\geq T\), no positive amount of flow can traverse \(P\) within the time limit. Suppose, therefore, that \(L(P)<T\). From \eqref{eq:transmission_time},
\[
f(d,\boldsymbol{x},P)\leq T
\]
is equivalent to
\[
\left\lceil
\frac{d}{K(P,\boldsymbol{x})}
\right\rceil
\leq T-L(P).
\]
Because \(T-L(P)\) is a positive integer, the preceding inequality holds if and only if
\[
K(P,\boldsymbol{x})
\geq
\left\lceil
\frac{d}{T-L(P)}
\right\rceil
=
\alpha(P).
\]
Combining this condition with \(g(d,P)=dC(P)\leq b\) proves the result.
\end{proof}

Proposition~\ref{prop:path_feasibility} is particularly useful because it separates the deterministic attributes of a path from the random system state. The quantities \(L(P)\), \(C(P)\), and \(\alpha(P)\) depend only on the path and the prescribed parameters \(d\), \(T\), and \(b\), whereas the condition
\(K(P,\boldsymbol{x})\geq\alpha(P)\) depends on the current arc capacities.

The maximum capacity that can possibly be provided by \(P\) is
\begin{equation}
M(P)=\min_{a_i\in P}M_i.
\label{eq:path_max_capacity}
\end{equation}
Therefore, a path can be feasible under at least one system state only if
\begin{equation}
L(P)<T,\qquad
dC(P)\leq b,\qquad
\alpha(P)\leq M(P).
\label{eq:potential_path_feasibility}
\end{equation}
A path satisfying \eqref{eq:potential_path_feasibility} will be called a
\emph{potentially \((d,T,b)\)-feasible path}.

For a fixed system state \(\boldsymbol{x}\), define
\begin{equation}
F(d,\boldsymbol{x},b)
=
\min
\left\{
f(d,\boldsymbol{x},P):
P\in\mathcal{P},\
K(P,\boldsymbol{x})>0,\
dC(P)\leq b
\right\},
\label{eq:minimum_transmission_time}
\end{equation}
where the minimum of an empty set is defined as \(+\infty\). Hence,
\(F(d,\boldsymbol{x},b)\leq T\) if and only if at least one path can transmit the prescribed demand within both the time and budget limits.

Let
\begin{equation}
\Psi_{d,T,b}
=
\left\{
\boldsymbol{x}:
\boldsymbol{0}\leq\boldsymbol{x}\leq\boldsymbol{M},
\quad
F(d,\boldsymbol{x},b)\leq T
\right\}
\label{eq:successful_state_set}
\end{equation}
be the set of all successful system states. The quickest path reliability under the budget constraint is consequently
\begin{equation}
R_{d,T,b}
=
\Pr\bigl(\boldsymbol{X}\in\Psi_{d,T,b}\bigr).
\label{eq:qpr_definition}
\end{equation}

The success property is monotone with respect to the arc capacities, as stated next.

\begin{lemma}
\label{lem:monotonicity}
If \(\boldsymbol{x}\leq\boldsymbol{y}\), then
\[
F(d,\boldsymbol{x},b)
\geq
F(d,\boldsymbol{y},b).
\]
Consequently, if \(\boldsymbol{x}\in\Psi_{d,T,b}\) and
\(\boldsymbol{y}\geq\boldsymbol{x}\), then
\(\boldsymbol{y}\in\Psi_{d,T,b}\).
\end{lemma}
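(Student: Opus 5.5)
The argument is a pathwise comparison: we show that each term in the minimum defining $F(d,\boldsymbol{y},b)$ is no larger than the corresponding term for $\boldsymbol{x}$, and that the feasible index set can only grow when passing from $\boldsymbol{x}$ to $\boldsymbol{y}$. Fix $\boldsymbol{x}\le\boldsymbol{y}$ and a minimal path $P\in\mathcal{P}$. Since the minimum is monotone in its arguments, \eqref{eq:path_capacity} gives
\[
K(P,\boldsymbol{x})=\min_{a_i\in P}x_i\le\min_{a_i\in P}y_i=K(P,\boldsymbol{y}).
\]
The budget condition $dC(P)\le b$ does not involve the state at all. Hence every $P$ admissible in \eqref{eq:minimum_transmission_time} for $\boldsymbol{x}$ (that is, with $K(P,\boldsymbol{x})>0$ and $dC(P)\le b$) is also admissible for $\boldsymbol{y}$, because $K(P,\boldsymbol{y})\ge K(P,\boldsymbol{x})>0$.

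Next we compare the transmission times of an admissible $P$. When $0<K(P,\boldsymbol{x})\le K(P,\boldsymbol{y})$ we have $d/K(P,\boldsymbol{y})\le d/K(P,\boldsymbol{x})$, and since the ceiling function is nondecreasing, \eqref{eq:transmission_time} yields $f(d,\boldsymbol{y},P)\le f(d,\boldsymbol{x},P)$. Combining the two observations,
\[
F(d,\boldsymbol{y},b)\le f(d,\boldsymbol{y},P)\le f(d,\boldsymbol{x},P)
\]
for every path admissible for $\boldsymbol{x}$. Taking the minimum over these paths gives $F(d,\boldsymbol{y},b)\le F(d,\boldsymbol{x},b)$. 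The degenerate case is handled by the convention in \eqref{eq:minimum_transmission_time}: if no path is admissible for $\boldsymbol{x}$, then $F(d,\boldsymbol{x},b)=+\infty$ and the inequality holds trivially.

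The consequence follows directly. If $\boldsymbol{x}\in\Psi_{d,T,b}$, then $F(d,\boldsymbol{y},b)\le F(d,\boldsymbol{x},b)\le T$. Moreover $\boldsymbol{y}$ inherits the constraint $\boldsymbol{0}\le\boldsymbol{y}\le\boldsymbol{M}$, which is implicit in the statement because $\boldsymbol{y}$ is a system state. Therefore $\boldsymbol{y}\in\Psi_{d,T,b}$ by \eqref{eq:successful_state_set}.

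There is no real obstacle in this proof. The only point needing care is the admissible set in the minimum, which depends on the state through the condition $K(P,\boldsymbol{x})>0$. The argument relies on this set being enlarged, not merely changed, when passing from $\boldsymbol{x}$ to $\boldsymbol{y}$; this is exactly what the bottleneck comparison establishes.
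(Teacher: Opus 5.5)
Your proof is correct and follows essentially the same route as the paper: the bottleneck comparison $K(P,\boldsymbol{x})\le K(P,\boldsymbol{y})$, the resulting pathwise inequality $f(d,\boldsymbol{x},P)\ge f(d,\boldsymbol{y},P)$, the state-independence of the cost, and then a minimum over budget-feasible paths. Your explicit check that the admissible set in \eqref{eq:minimum_transmission_time} can only grow, together with the empty-set convention, is extra care that the paper's proof leaves implicit.
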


\begin{proof}
For every path \(P\),
\[
K(P,\boldsymbol{x})
\leq
K(P,\boldsymbol{y}).
\]
It follows from \eqref{eq:transmission_time} that
\[
f(d,\boldsymbol{x},P)
\geq
f(d,\boldsymbol{y},P).
\]
Moreover, the path cost does not depend on the system state. Taking the minimum over all paths satisfying the budget constraint gives the stated inequality.
\end{proof}

\subsection{Boundary State Vectors}
\label{subsec:boundary_vectors}

Since \(\Psi_{d,T,b}\) is an upper set under the componentwise ordering, it can be represented completely by its minimal elements.

\begin{definition}
\label{def:boundary_vector}
A state vector \(\boldsymbol{x}\) is called a
\emph{\((d,T,b)\) lower boundary vector} if
\[
\boldsymbol{x}\in\Psi_{d,T,b}
\]
and there is no state vector
\(\boldsymbol{y}\in\Psi_{d,T,b}\) such that
\(\boldsymbol{y}<\boldsymbol{x}\).
\end{definition}

The complete set of \((d,T,b)\) lower boundary vectors is denoted by
\begin{equation}
\Psi^{\min}_{d,T,b}
=
\left\{
\boldsymbol{x}^{1},
\boldsymbol{x}^{2},
\ldots,
\boldsymbol{x}^{\sigma}
\right\},
\label{eq:boundary_set}
\end{equation}
where \(\sigma=|\Psi^{\min}_{d,T,b}|\).

For every potentially feasible minimal path \(P\), define its associated state vector
\(\boldsymbol{x}^{P}\) by
\begin{equation}
x_i^{P}
=
\begin{cases}
\alpha(P), & a_i\in P,\\
0,         & a_i\notin P.
\end{cases}
\label{eq:path_boundary_vector}
\end{equation}
Because \(\alpha(P)\leq M(P)\), the vector
\(\boldsymbol{x}^{P}\) is a valid system state. Furthermore,
\[
K(P,\boldsymbol{x}^{P})=\alpha(P),
\]
and Proposition~\ref{prop:path_feasibility} guarantees that
\(\boldsymbol{x}^{P}\in\Psi_{d,T,b}\).

\begin{proposition}
\label{prop:path_to_boundary}
If \(P\) is a potentially \((d,T,b)\)-feasible minimal path, then the vector \(\boldsymbol{x}^{P}\) defined by
\eqref{eq:path_boundary_vector} is a \((d,T,b)\) lower boundary vector.
\end{proposition}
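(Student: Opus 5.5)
The plan is a direct argument by contradiction. Membership \(\boldsymbol{x}^{P}\in\Psi_{d,T,b}\) has already been established just before the statement, from \(K(P,\boldsymbol{x}^{P})=\alpha(P)\) and Proposition~\ref{prop:path_feasibility}. So it remains to show minimality in the sense of Definition~\ref{def:boundary_vector}. Suppose, to the contrary, that some \(\boldsymbol{y}\in\Psi_{d,T,b}\) satisfies \(\boldsymbol{y}<\boldsymbol{x}^{P}\). We will show that this forces \(\boldsymbol{y}=\boldsymbol{x}^{P}\).

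First, since \(F(d,\boldsymbol{y},b)\leq T<+\infty\), definition \eqref{eq:minimum_transmission_time} gives a minimal path \(Q\in\mathcal{P}\) with
\[
K(Q,\boldsymbol{y})>0,\qquad dC(Q)\leq b,\qquad f(d,\boldsymbol{y},Q)\leq T.
\]
Second, I locate \(Q\). Because \(\boldsymbol{y}\leq\boldsymbol{x}^{P}\) and \(x_i^{P}=0\) for \(a_i\notin P\), every arc with \(y_i>0\) lies in \(P\). The condition \(K(Q,\boldsymbol{y})=\min_{a_i\in Q}y_i>0\) then forces every arc of \(Q\) to belong to \(P\), that is, \(Q\subseteq P\) as arc sets. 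Since \(Q\) is itself a source--destination path, minimality of \(P\) (no proper subset of its arcs contains a \(1\)--\(n\) path) forces \(Q=P\). This is the step I expect to need the most care. One must check that the support argument really yields containment of arc sets, which is how minimality is defined. In the undirected case the arc set of \(P\) still contains the path \(Q\), so the argument is unaffected.

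Third, with \(Q=P\), the path \(P\) is \((d,T,b)\)-feasible under \(\boldsymbol{y}\). Proposition~\ref{prop:path_feasibility} then gives \(K(P,\boldsymbol{y})\geq\alpha(P)\), hence \(y_i\geq\alpha(P)=x_i^{P}\) for all \(a_i\in P\). For \(a_i\notin P\) we have \(y_i\geq 0=x_i^{P}\) trivially. Thus \(\boldsymbol{y}\geq\boldsymbol{x}^{P}\). Combined with \(\boldsymbol{y}\leq\boldsymbol{x}^{P}\), this gives \(\boldsymbol{y}=\boldsymbol{x}^{P}\), contradicting \(\boldsymbol{y}<\boldsymbol{x}^{P}\). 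Hence \(\boldsymbol{x}^{P}\) is a \((d,T,b)\) lower boundary vector.
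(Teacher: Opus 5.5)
Your proof is correct and follows essentially the same route as the paper's. Both arguments use the fact that $\boldsymbol{x}^{P}$ vanishes off $P$ to force any successful path under a smaller state to use only arcs of $P$, hence to equal $P$ by minimality, and then use Proposition~\ref{prop:path_feasibility} to show that $P$ itself needs every arc at level at least $\alpha(P)$. Your version is a little tidier in two respects: you take the witnessing path directly from the definition of $F(d,\boldsymbol{y},b)$ as a member of $\mathcal{P}$, and you conclude $\boldsymbol{y}\geq\boldsymbol{x}^{P}$ explicitly.
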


\begin{proof}
The vector \(\boldsymbol{x}^{P}\) is successful because \(P\) has capacity
\(\alpha(P)\) under this state and satisfies the lead-time and budget conditions. Suppose that a component of
\(\boldsymbol{x}^{P}\) is reduced. Then, at least one arc of \(P\) has capacity smaller than \(\alpha(P)\), and \(P\) can no longer satisfy the time limit.

In addition, all arcs outside \(P\) have capacity zero in
\(\boldsymbol{x}^{P}\). Hence, any other successful source--destination path under a state smaller than \(\boldsymbol{x}^{P}\) would have to be composed entirely of arcs belonging to \(P\). This would imply the existence of a proper subset of \(P\) that is also a source--destination path, contradicting the minimality of \(P\). Therefore, no successful state is strictly smaller than
\(\boldsymbol{x}^{P}\).
\end{proof}

\begin{proposition}
\label{prop:boundary_to_path}
For every \((d,T,b)\) lower boundary vector
\(\boldsymbol{x}\), there exists a potentially feasible minimal path \(P\) such that
\[
\boldsymbol{x}=\boldsymbol{x}^{P}.
\]
\end{proposition}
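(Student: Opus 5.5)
Let \(\boldsymbol{x}\) be a \((d,T,b)\) lower boundary vector. The plan is to extract a path from the definition of success, pass to a minimal path inside it, and then use the minimality of \(\boldsymbol{x}\) together with Proposition~\ref{prop:path_to_boundary} to force equality.

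Since \(\boldsymbol{x}\in\Psi_{d,T,b}\), we have \(F(d,\boldsymbol{x},b)\le T\), and \eqref{eq:minimum_transmission_time} ranges over \(\mathcal{P}\). Hence there is a minimal path \(P\in\mathcal{P}\) with \(K(P,\boldsymbol{x})>0\), \(dC(P)\le b\) and \(f(d,\boldsymbol{x},P)\le T\). Proposition~\ref{prop:path_feasibility} then gives
\[
L(P)<T,\qquad dC(P)\le b,\qquad K(P,\boldsymbol{x})\ge\alpha(P).
\]
If the definition of \(F\) were read over arbitrary (not necessarily minimal) paths, one would first shrink the path to a minimal subpath \(P'\subseteq P\). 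Removing arcs can only decrease \(L\) and \(C\), since lead times and costs are nonnegative. It can only increase the bottleneck \(K\). Moreover \(\alpha\) is nonincreasing as \(L\) decreases. So \(P'\) still satisfies all three conditions, and this step is routine.

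Because \(\alpha(P)\le K(P,\boldsymbol{x})\le M(P)\), the path \(P\) is potentially feasible, and \(\boldsymbol{x}^{P}\) from \eqref{eq:path_boundary_vector} is defined. We compare componentwise. For \(a_i\in P\), \(x_i\ge K(P,\boldsymbol{x})\ge\alpha(P)=x_i^P\). For \(a_i\notin P\), \(x_i\ge 0=x_i^P\). Hence \(\boldsymbol{x}^P\le\boldsymbol{x}\). By Proposition~\ref{prop:path_to_boundary} (or directly Proposition~\ref{prop:path_feasibility}), \(\boldsymbol{x}^P\in\Psi_{d,T,b}\). If \(\boldsymbol{x}^P\ne\boldsymbol{x}\), then \(\boldsymbol{x}^P<\boldsymbol{x}\) would be a successful state strictly below \(\boldsymbol{x}\), contradicting Definition~\ref{def:boundary_vector}. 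Therefore \(\boldsymbol{x}=\boldsymbol{x}^P\).

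The only delicate point is the first step: the feasible path must be minimal, or must be reducible to a minimal one without losing feasibility. I expect this to be the main (mild) obstacle, and it is handled by the monotonicity remarks above. The rest is a direct comparison of vectors.
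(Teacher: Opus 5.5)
Your proof is correct and follows the paper's argument. You take a feasible minimal path $P$, show $\boldsymbol{x}^{P}\le\boldsymbol{x}$ componentwise via $K(P,\boldsymbol{x})\ge\alpha(P)$, note that $\boldsymbol{x}^{P}$ is successful, and conclude $\boldsymbol{x}=\boldsymbol{x}^{P}$ from the minimality of $\boldsymbol{x}$. The one minor difference is that you get minimality of $P$ directly from the fact that $F$ in \eqref{eq:minimum_transmission_time} ranges over $\mathcal{P}$, whereas the paper obtains it by removing cycles and unnecessary arcs; your fallback remark covers that route too.
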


\begin{proof}
Since \(\boldsymbol{x}\) is successful, there exists a feasible source--destination path under \(\boldsymbol{x}\). Removing a cycle cannot increase the lead time or cost and cannot decrease the bottleneck capacity because all lead times and costs are nonnegative. Repeatedly removing cycles and unnecessary arcs therefore produces a feasible minimal path \(P\). By Proposition~\ref{prop:path_feasibility},
\[
K(P,\boldsymbol{x})\geq\alpha(P).
\]
Consequently, the vector \(\boldsymbol{x}^{P}\) defined by
\eqref{eq:path_boundary_vector} satisfies
\[
\boldsymbol{x}^{P}\leq\boldsymbol{x}.
\]
The vector \(\boldsymbol{x}^{P}\) is itself successful. Since
\(\boldsymbol{x}\) is a lower boundary vector, it cannot strictly dominate another successful state. Therefore,
\(\boldsymbol{x}=\boldsymbol{x}^{P}\).
\end{proof}

Propositions~\ref{prop:path_to_boundary} and
\ref{prop:boundary_to_path} yield the following characterization:
\begin{equation}
\Psi^{\min}_{d,T,b}
=
\left\{
\boldsymbol{x}^{P}:
P\in\mathcal{P},\
L(P)<T,\
dC(P)\leq b,\
\alpha(P)\leq M(P)
\right\}.
\label{eq:boundary_characterization}
\end{equation}
Thus, determining the lower boundary set is equivalent to finding every potentially feasible minimal path and constructing its associated vector. This characterization forms the theoretical basis of the proposed algorithm. In contrast to approaches that first enumerate all minimal paths, the algorithm developed in Section~\ref{sec:algorithm} constructs only relevant paths and eliminates an infeasible partial path as soon as the new look-ahead conditions establish that it cannot be completed feasibly.

\subsection{Reliability Calculation from Boundary Vectors}
\label{subsec:reliability_boundary}

For every lower boundary vector
\(\boldsymbol{x}^{r}\in\Psi^{\min}_{d,T,b}\), define its upper-state event as
\begin{equation}
E_r
=
\left\{
\boldsymbol{x}:
\boldsymbol{x}\geq\boldsymbol{x}^{r}
\right\},
\qquad
r=1,2,\ldots,\sigma.
\label{eq:upper_state_event}
\end{equation}
By Lemma~\ref{lem:monotonicity}, every state belonging to \(E_r\) is successful. Conversely, every successful state dominates at least one lower boundary vector. Therefore,
\begin{equation}
\Psi_{d,T,b}
=
\bigcup_{r=1}^{\sigma}E_r
\label{eq:success_union}
\end{equation}
and
\begin{equation}
R_{d,T,b}
=
\Pr\left(
\bigcup_{r=1}^{\sigma}E_r
\right).
\label{eq:reliability_union}
\end{equation}

For arc \(a_i\), define the upper-tail probability
\begin{equation}
\overline{p}_i(k)
=
\Pr(x_i\geq k)
=
\sum_{q=k}^{M_i}p_{i,q}.
\label{eq:tail_probability}
\end{equation}
Under the independence assumption, the probability associated with a single lower boundary vector is
\begin{equation}
\Pr(E_r)
=
\prod_{i=1}^{m}
\overline{p}_i(x_i^{r}).
\label{eq:single_boundary_probability}
\end{equation}

The events \(E_1,\ldots,E_\sigma\) generally overlap. Therefore, their probabilities cannot simply be added. For any nonempty index set
\(S\subseteq\{1,\ldots,\sigma\}\), let
\begin{equation}
\boldsymbol{z}^{S}
=
\bigvee_{r\in S}\boldsymbol{x}^{r}.
\label{eq:intersection_vector}
\end{equation}
The intersection of the corresponding upper-state events is
\[
\bigcap_{r\in S}E_r
=
\left\{
\boldsymbol{x}:
\boldsymbol{x}\geq\boldsymbol{z}^{S}
\right\},
\]
and hence
\begin{equation}
\Pr\left(\bigcap_{r\in S}E_r\right)
=
\prod_{i=1}^{m}
\overline{p}_i(z_i^{S}).
\label{eq:intersection_probability}
\end{equation}
An exact inclusion--exclusion representation of the reliability is consequently
\begin{equation}
R_{d,T,b}
=
\sum_{\emptyset\neq S\subseteq\{1,\ldots,\sigma\}}
(-1)^{|S|+1}
\prod_{i=1}^{m}
\overline{p}_i(z_i^{S}).
\label{eq:inclusion_exclusion_reliability}
\end{equation}

Although \eqref{eq:inclusion_exclusion_reliability} provides a direct exact expression, its number of terms can grow exponentially with \(\sigma\). In practice, the union probability in
\eqref{eq:reliability_union} may instead be evaluated using an exact method such as the sum of disjoint products, inclusion--exclusion with preprocessing, a binary decision diagram, or a recursive factoring procedure
\cite{balan2003preprocessing,elkhadiri2023factoring}.

The principal objective of the present paper is the efficient exact generation of
\(\Psi^{\min}_{d,T,b}\). Once this set has been obtained, it constitutes a complete representation of the successful state space and can be supplied to any suitable exact reliability-evaluation method. 

\section{The Proposed Resource-Aware Look-Ahead Algorithm}
\label{sec:algorithm}

This section develops an exact sequential algorithm for generating all the \((d,T,b)\) lower boundary vectors characterized in
\eqref{eq:boundary_characterization}. The method retains the principal advantage of the node--child-matrix algorithm, namely, constructing relevant paths without receiving the complete set of minimal paths as input. Its main improvement is the use of destination-based information to determine whether a partial path can still be completed feasibly. The resulting look-ahead conditions account for the lead time, transmission cost, and attainable bottleneck capacity of the unconstructed portion of the path.

\subsection{Motivation and Limitation of the Existing NCM Algorithm}
\label{subsec:ncm_limitation}

The node--child-matrix (NCM) algorithm proposed in
\cite{forghani2023using} uses a backtracking procedure to construct simple source--destination paths. During the construction of a partial path, it maintains its accumulated lead time, accumulated unit transmission cost, and maximum attainable bottleneck capacity. After adding an arc, the algorithm verifies whether the resulting partial path can still satisfy the time, budget, and capacity conditions. A branch is terminated if one of these conditions is violated.

This strategy is exact and avoids generating all minimal paths in a separate preliminary stage. Nevertheless, its pruning conditions depend mainly on the resources already consumed by the partial path. They do not fully account for the resources that will necessarily be required to travel from the current node to the destination.

To explain this issue, let
\[
Q=(1,\ldots,v)
\]
be a partial simple path from the source to a current node \(v\). Denote its accumulated lead time and unit cost by
\begin{equation}
L(Q)=\sum_{a_i\in Q}\ell_i
\quad\text{and}\quad
C(Q)=\sum_{a_i\in Q}c_i,
\label{eq:partial_resources}
\end{equation}
respectively. Its maximum attainable bottleneck capacity is
\begin{equation}
U(Q)=\min_{a_i\in Q}M_i,
\label{eq:partial_bottleneck}
\end{equation}
where \(U(Q)=+\infty\) when \(Q\) contains no arc.

The existing NCM algorithm may continue expanding \(Q\) whenever
\[
L(Q)<T,\qquad dC(Q)\leq b,
\]
and its current bottleneck capacity is not already too small. However, these conditions do not guarantee that \(Q\) admits a feasible continuation. For example, every path from \(v\) to \(n\) may require so much additional lead time that the time limit will necessarily be exceeded. Similarly, every continuation may cause the budget constraint to be violated, or the arcs that provide a sufficiently short and inexpensive continuation may not have adequate maximum capacities.

Therefore, a partial path can pass all the tests of the existing NCM algorithm even though no feasible complete path can contain it. Such a partial path does not cause an incorrect solution, because it will eventually be rejected. It may, however, generate a large number of unnecessary descendant paths before its infeasibility becomes evident.

The proposed method addresses this limitation by calculating admissible lower bounds on the resources required to complete \(Q\). These values are used only as bounds; they do not select a particular continuation and do not assume that a shortest path is itself feasible. Consequently, the proposed tests never discard a partial path merely because one selected shortest path fails. A partial path is discarded only when the bounds prove that every possible continuation is infeasible.

For notational convenience, define
\begin{equation}
B=\frac{b}{d}
\label{eq:unit_path_budget}
\end{equation}
as the maximum allowable unit cost of a path. Thus, the budget condition \(dC(P)\leq b\) is equivalent to \(C(P)\leq B\).

\subsection{Basic Destination-Based Bounds}
\label{subsec:basic_bounds}

For every node \(v\in N\), let
\begin{equation}
\underline{L}(v)
=
\min\{L(R):R\text{ is a }v\text{--}n\text{ path}\}
\label{eq:basic_lead_bound}
\end{equation}
be the shortest lead time from \(v\) to the destination. Similarly, define
\begin{equation}
\underline{C}(v)
=
\min\{C(R):R\text{ is a }v\text{--}n\text{ path}\}
\label{eq:basic_cost_bound}
\end{equation}
as the least unit transmission cost from \(v\) to \(n\). If the destination is unreachable from \(v\), the corresponding value is \(+\infty\). At the destination,
\[
\underline{L}(n)=\underline{C}(n)=0.
\]

The two quantities may be attained by different paths. This does not affect their validity as lower bounds. For every continuation \(R\) from \(v\) to \(n\),
\begin{equation}
L(R)\geq\underline{L}(v)
\quad\text{and}\quad
C(R)\geq\underline{C}(v).
\label{eq:basic_bound_validity}
\end{equation}
Hence, if either
\begin{equation}
L(Q)+\underline{L}(v)\geq T
\label{eq:basic_lead_pruning}
\end{equation}
or
\begin{equation}
C(Q)+\underline{C}(v)>B
\label{eq:basic_cost_pruning}
\end{equation}
holds, the partial path \(Q\) cannot have a feasible continuation.

A further basic bound can be obtained from the attainable capacity. Let
\begin{equation}
W(v)
=
\max_{R:v\leadsto n}\min_{a_i\in R}M_i
\label{eq:widest_suffix}
\end{equation}
denote the widest-path capacity from \(v\) to \(n\). We use the conventions \(W(n)=+\infty\) and \(W(v)=0\) when \(n\) is unreachable from \(v\). Any completion of \(Q\) has a maximum bottleneck capacity no larger than
\begin{equation}
\widehat{U}(Q,v)
=
\min\{U(Q),W(v)\}.
\label{eq:maximum_completion_capacity}
\end{equation}
Therefore, the smallest transmission time that could optimistically be obtained by any completion of \(Q\) is bounded from below by
\begin{equation}
L(Q)+\underline{L}(v)
+
\left\lceil
\frac{d}{\widehat{U}(Q,v)}
\right\rceil.
\label{eq:basic_time_lower_bound}
\end{equation}

\begin{proposition}
\label{prop:basic_pruning}
A partial path \(Q=(1,\ldots,v)\) has no potentially
\((d,T,b)\)-feasible completion if at least one of the following conditions holds:
\begin{align}
\underline{L}(v)&=+\infty, \label{eq:basic_test_unreachable}\\
\widehat{U}(Q,v)&=0, \label{eq:basic_test_capacity}\\
C(Q)+\underline{C}(v)&>B, \label{eq:basic_test_cost}\\
L(Q)+\underline{L}(v)
+\left\lceil\frac{d}{\widehat{U}(Q,v)}\right\rceil
&>T. \label{eq:basic_test_time}
\end{align}
\end{proposition}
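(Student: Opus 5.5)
We argue by contraposition: suppose $Q=(1,\ldots,v)$ admits a potentially $(d,T,b)$-feasible completion, and show that none of \eqref{eq:basic_test_unreachable}--\eqref{eq:basic_test_time} holds. A completion is a source--destination path $P=Q\cdot R$, where $R$ is a $v$--$n$ path. Because costs, lead times and capacities combine additively or by minimum over disjoint arc sets, we have $L(P)=L(Q)+L(R)$, $C(P)=C(Q)+C(R)$ and $M(P)=\min\{U(Q),M(R)\}$, with $M(R)=\min_{a_i\in R}M_i$ (and $M(R)=+\infty$ when $v=n$ and $R$ is empty). By \eqref{eq:potential_path_feasibility} and the definition of $B$, $P$ satisfies $L(P)<T$, $C(P)\le B$ and $\alpha(P)\le M(P)$.

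We then rule out each condition in turn. First, the existence of $R$ means $n$ is reachable from $v$, so $\underline{L}(v)<+\infty$. Second, by \eqref{eq:widest_suffix} we have $M(R)\le W(v)$, and therefore $M(P)\le\widehat U(Q,v)$. Since $T-L(P)\geq 1$, we get $\alpha(P)\ge 1$, hence $M(P)\ge 1$ and $\widehat U(Q,v)\ge 1$, so \eqref{eq:basic_test_capacity} fails. Third, by \eqref{eq:basic_bound_validity}, $C(Q)+\underline C(v)\le C(P)\le B$, so \eqref{eq:basic_test_cost} fails.

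The only step needing care is the time test \eqref{eq:basic_test_time}. The plan is to reuse the equivalence in the proof of Proposition~\ref{prop:path_feasibility}. Since $T-L(P)$ is a positive integer, $M(P)\ge\alpha(P)=\lceil d/(T-L(P))\rceil$ is equivalent to $\lceil d/M(P)\rceil\le T-L(P)$. Using monotonicity of $k\mapsto\lceil d/k\rceil$ and $M(P)\le\widehat U(Q,v)$, we obtain
\[
L(Q)+\underline L(v)+\left\lceil\frac{d}{\widehat U(Q,v)}\right\rceil
\le L(Q)+L(R)+\left\lceil\frac{d}{M(P)}\right\rceil
= L(P)+\left\lceil\frac{d}{M(P)}\right\rceil\le T .
\]
When $\widehat U(Q,v)=+\infty$ (possible only if $Q$ has no arcs and $v=n$, a degenerate case), we interpret $\lceil d/\infty\rceil=0$, and the inequality still holds. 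Hence \eqref{eq:basic_test_time} fails as well. This completes the contrapositive.

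The main point to verify carefully is the direction of the equivalence $\lceil d/K\rceil\le q \iff K\ge\lceil d/q\rceil$ for positive integers $q$. It follows from $\lceil d/K\rceil\le q\iff d/K\le q\iff K\ge d/q$, together with the fact that $K$ is an integer.
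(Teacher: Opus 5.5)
Your proposal is correct and follows essentially the same route as the paper: you bound $L(P)$ and $C(P)$ from below by $L(Q)+\underline{L}(v)$ and $C(Q)+\underline{C}(v)$, bound $M(P)$ from above by $\widehat{U}(Q,v)$, and use monotonicity of $k\mapsto\lceil d/k\rceil$ to compare with $L(P)+\lceil d/M(P)\rceil\le T$. The differences are only in presentation: you argue by contraposition, and you explicitly handle the unreachable case, the zero-capacity case (via $\alpha(P)\ge 1$), and the ceiling equivalence, all of which the paper's proof leaves implicit.
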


\begin{proof}
Let \(P=Q\oplus R\) be any completion of \(Q\), where \(R\) connects \(v\) to \(n\). From
\eqref{eq:basic_bound_validity},
\[
L(P)\geq L(Q)+\underline{L}(v)
\quad\text{and}\quad
C(P)\geq C(Q)+\underline{C}(v).
\]
Moreover,
\[
M(P)\leq\widehat{U}(Q,v).
\]
It follows that
\[
L(P)+\left\lceil\frac{d}{M(P)}\right\rceil
\geq
L(Q)+\underline{L}(v)
+\left\lceil\frac{d}{\widehat{U}(Q,v)}\right\rceil.
\]
Thus, any of
\eqref{eq:basic_test_unreachable}--\eqref{eq:basic_test_time}
makes every completion infeasible.
\end{proof}

These tests are inexpensive and can eliminate many partial paths. Nevertheless, they combine the unrestricted shortest lead time, least cost, and widest capacity, which may be achieved by three different continuations. They are therefore deliberately optimistic. The following subsections develop stronger bounds that incorporate the required capacity directly into the underlying networks.

\subsection{Capacity-Filtered Networks}
\label{subsec:filtered_networks}

Let
\begin{equation}
\Lambda
=
\{\lambda_1,\lambda_2,\ldots,\lambda_\kappa\}
=
\{M_i:M_i>0,\ a_i\in A\}
\label{eq:capacity_levels}
\end{equation}
be the set of distinct positive maximum-capacity levels, arranged so that
\[
\lambda_1<\lambda_2<\cdots<\lambda_\kappa.
\]
For each \(\lambda\in\Lambda\), define the capacity-filtered network
\begin{equation}
G_\lambda=(N,A_\lambda),
\qquad
A_\lambda=\{a_i\in A:M_i\geq\lambda\}.
\label{eq:filtered_network}
\end{equation}
Every path in \(G_\lambda\) can potentially provide a bottleneck capacity of at least \(\lambda\). As \(\lambda\) increases, the filtered network becomes sparser:
\[
A_{\lambda_1}\supseteq A_{\lambda_2}
\supseteq\cdots\supseteq A_{\lambda_\kappa}.
\]

For every \(v\in N\) and \(\lambda\in\Lambda\), define
\begin{align}
\underline{L}_{\lambda}(v)
&=
\min\{L(R):R\text{ is a }v\text{--}n
\text{ path in }G_\lambda\},
\label{eq:filtered_lead_bound}\\
\underline{C}_{\lambda}(v)
&=
\min\{C(R):R\text{ is a }v\text{--}n
\text{ path in }G_\lambda\}.
\label{eq:filtered_cost_bound}
\end{align}
Again, an unavailable path is represented by \(+\infty\), and
\[
\underline{L}_{\lambda}(n)
=
\underline{C}_{\lambda}(n)=0.
\]

The values in \eqref{eq:filtered_lead_bound} and
\eqref{eq:filtered_cost_bound} can be computed by applying a shortest-path algorithm from the destination in the reverse of each filtered network. Nonnegative lead times and costs allow Dijkstra's algorithm to be used. For an undirected network, the same computations are performed without reversing the arcs.

It is important to emphasize that
\(\underline{L}_{\lambda}(v)\) and
\(\underline{C}_{\lambda}(v)\) are not used to choose a specific suffix. They may correspond to different paths and are used only as optimistic lower bounds. Moreover, these paths may include a node already contained in \(Q\). Allowing this possibility can make the bounds weaker, but never invalid: forbidding previously visited nodes can only increase the resource requirements of an admissible simple continuation.

\subsection{Capacity-Aware Look-Ahead Results}
\label{subsec:capacity_aware_results}

Consider a partial path \(Q=(1,\ldots,v)\). If a complete path containing \(Q\) has bottleneck capacity \(\lambda\), then necessarily
\[
\lambda\leq U(Q),
\]
and its suffix from \(v\) to \(n\) must belong to \(G_\lambda\). These observations motivate the following set of admissible capacity levels:
\begin{equation}
\begin{split}
\mathcal{K}(Q,v)
=
\biggl\{
\lambda\in\Lambda:\;&
\lambda\leq U(Q),\
\underline{L}_{\lambda}(v)<+\infty,\
\underline{C}_{\lambda}(v)<+\infty,\\
&
L(Q)+\underline{L}_{\lambda}(v)
+\left\lceil\frac{d}{\lambda}\right\rceil\leq T,\\
&
C(Q)+\underline{C}_{\lambda}(v)\leq B
\biggr\}.
\end{split}
\label{eq:admissible_capacity_set}
\end{equation}

The time test in \eqref{eq:admissible_capacity_set} jointly considers the lead time already consumed, a lower bound on the remaining lead time under capacity requirement \(\lambda\), and the transmission term associated with that capacity. Similarly, the cost test combines the cost of the partial path with a lower bound on the cost of a continuation that can support capacity \(\lambda\).

\begin{lemma}
\label{lem:filtered_bounds}
Let \(P=Q\oplus R\) be a potentially feasible completion of \(Q\), and let
\[
\lambda=M(P)=\min_{a_i\in P}M_i.
\]
Then,
\begin{equation}
\underline{L}_{\lambda}(v)\leq L(R)
\quad\text{and}\quad
\underline{C}_{\lambda}(v)\leq C(R).
\label{eq:filtered_bound_admissibility}
\end{equation}
\end{lemma}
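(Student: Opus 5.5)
The plan is to show that the suffix \(R\) is itself a \(v\)--\(n\) path in the filtered network \(G_\lambda\). Once this is established, both inequalities in \eqref{eq:filtered_bound_admissibility} follow immediately from the definitions \eqref{eq:filtered_lead_bound} and \eqref{eq:filtered_cost_bound}. Those definitions take minima over all \(v\)--\(n\) paths in \(G_\lambda\), so \(R\) is one of the competitors, giving \(\underline{L}_{\lambda}(v)\leq L(R)\) and \(\underline{C}_{\lambda}(v)\leq C(R)\).

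The key steps, in order, are as follows.

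First, I will check that \(\lambda\in\Lambda\), so that \(G_\lambda\) and the quantities \(\underline{L}_\lambda(v)\), \(\underline{C}_\lambda(v)\) are actually defined. Since \(P\) is potentially feasible, \eqref{eq:potential_path_feasibility} gives \(1\leq\alpha(P)\leq M(P)=\lambda\); note \(\alpha(P)\geq 1\) because \(d\) is a positive integer and \(T-L(P)>0\). Moreover \(\lambda=M_j\) for some arc \(a_j\in P\), namely an arc attaining the minimum. Hence \(\lambda\) is a positive maximum-capacity level, so \(\lambda\in\Lambda\) by \eqref{eq:capacity_levels}.

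Second, I will show every arc of \(R\) lies in \(A_\lambda\). Every arc \(a_i\in R\) is also in \(P\), so \(M_i\geq\min_{a_k\in P}M_k=\lambda\). By \eqref{eq:filtered_network} this means \(a_i\in A_\lambda\), and therefore \(R\) is a \(v\)--\(n\) path in \(G_\lambda\).

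Third, I will handle the degenerate case \(v=n\), where \(R\) is empty. Then \(L(R)=C(R)=0=\underline{L}_\lambda(n)=\underline{C}_\lambda(n)\), so the inequalities hold trivially.

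I do not expect a real obstacle; the argument is essentially definitional. The only points needing care are the membership \(\lambda\in\Lambda\), which relies on potential feasibility to exclude \(\lambda=0\), and the remark that the filtered minima range over all paths in \(G_\lambda\), not just those avoiding the nodes of \(Q\). That remark only enlarges the feasible set of the minimum, so it works in our favour.
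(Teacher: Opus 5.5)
Your proposal is correct and follows the paper's proof: every arc of $R$ lies on $P$ and so has $M_i\geq\lambda$, which makes $R$ a $v$--$n$ path in $G_\lambda$, and both inequalities then follow directly from the definitions of the filtered minima. Your extra checks are harmless additions. The check that $\lambda\in\Lambda$ appears in the paper only later, in the proof of Theorem~\ref{thm:capacity_lookahead}; positivity of $\lambda$ also follows directly from the assumption that every $M_i$ is positive. The case $v=n$ is trivial.
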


\begin{proof}
Every arc of \(P\), and hence every arc of \(R\), has maximum capacity at least \(\lambda\). Therefore, \(R\) is a path in \(G_\lambda\). The result follows immediately from the definitions of
\(\underline{L}_{\lambda}(v)\) and
\(\underline{C}_{\lambda}(v)\).
\end{proof}

\begin{theorem}
\label{thm:capacity_lookahead}
If
\[
\mathcal{K}(Q,v)=\varnothing,
\]
then no potentially \((d,T,b)\)-feasible path can contain \(Q\) as a prefix.
\end{theorem}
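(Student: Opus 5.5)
We prove the contrapositive. Assume some potentially \((d,T,b)\)-feasible path \(P\) has \(Q\) as a prefix, and write \(P=Q\oplus R\), where \(R\) is the suffix from \(v\) to \(n\). We will exhibit an element of \(\mathcal{K}(Q,v)\). The natural candidate is \(\lambda=M(P)=\min_{a_i\in P}M_i\). We then check each defining condition of \eqref{eq:admissible_capacity_set} in turn.

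First, \(\lambda\in\Lambda\). Since \(P\) is potentially feasible, \eqref{eq:potential_path_feasibility} gives \(L(P)<T\) and \(\alpha(P)\le M(P)\). Because \(d\) is positive and \(T-L(P)\) is a positive integer, \(\alpha(P)=\lceil d/(T-L(P))\rceil\ge 1\), so \(M(P)\ge 1\). As \(M(P)\) equals some arc's maximum capacity, it is a positive level in \(\Lambda\).

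Second, \(\lambda\le U(Q)\). This holds because the arcs of \(Q\) are among those of \(P\), so the minimum over \(P\) cannot exceed the minimum over \(Q\). If \(Q\) has no arcs, \(U(Q)=+\infty\) and the condition is trivial.

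Third, finiteness. Lemma~\ref{lem:filtered_bounds} applies to \(P\) and \(\lambda\), giving \(\underline{L}_\lambda(v)\le L(R)<+\infty\) and \(\underline{C}_\lambda(v)\le C(R)<+\infty\). When \(v=n\), both values are \(0\).

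Fourth, the cost test. By Lemma~\ref{lem:filtered_bounds},
\[
C(Q)+\underline{C}_\lambda(v)\le C(Q)+C(R)=C(P)\le B,
\]
where the last inequality is \(dC(P)\le b\) divided by \(d>0\).

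Fifth, the time test, which is the only step needing care. By Lemma~\ref{lem:filtered_bounds},
\[
L(Q)+\underline{L}_\lambda(v)+\left\lceil\frac{d}{\lambda}\right\rceil\le L(P)+\left\lceil\frac{d}{M(P)}\right\rceil .
\]
It remains to show the right-hand side is at most \(T\). Consider the state \(\boldsymbol{x}\) with \(x_i=M_i\) on the arcs of \(P\) and \(0\) elsewhere. Then \(K(P,\boldsymbol{x})=M(P)\ge\alpha(P)\), and \(\boldsymbol{x}\) is a valid state. Proposition~\ref{prop:path_feasibility} now shows \(P\) is feasible under \(\boldsymbol{x}\), that is,
\[
f(d,\boldsymbol{x},P)=L(P)+\left\lceil\frac{d}{M(P)}\right\rceil\le T .
\]
Alternatively, this follows directly: \(M(P)\ge\lceil d/(T-L(P))\rceil\) implies \(d/M(P)\le T-L(P)\), and since \(T-L(P)\) is an integer, \(\lceil d/M(P)\rceil\le T-L(P)\).

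All conditions hold, so \(\lambda\in\mathcal{K}(Q,v)\) and \(\mathcal{K}(Q,v)\neq\varnothing\), contradicting the hypothesis. This proves the theorem.

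The main subtlety is choosing the bottleneck level \(M(P)\) rather than \(\alpha(P)\) as the witness. The level \(\alpha(P)\) may not belong to \(\Lambda\), and it need not make \(R\) a path of the corresponding filtered network. With \(\lambda=M(P)\), both problems disappear, and the time inequality follows from the ceiling argument in the fifth step.
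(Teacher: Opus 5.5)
Your proof is correct and follows the paper's argument: the paper also takes the witness $\lambda=M(P)$, uses $\lambda\le U(Q)$ and Lemma~\ref{lem:filtered_bounds}, and concludes $\lambda\in\mathcal{K}(Q,v)$, contradicting emptiness. Your version is slightly more complete, since you explicitly check $M(P)\ge 1$, the finiteness conditions, and the ceiling step $L(P)+\lceil d/M(P)\rceil\le T$ from $\alpha(P)\le M(P)$; the paper asserts the time inequality directly from potential feasibility and does not spell out the other two checks.
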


\begin{proof}
Suppose, to the contrary, that \(Q\) has a potentially feasible completion
\(P=Q\oplus R\). Let \(\lambda=M(P)\). Since the bottleneck of a path is attained by at least one of its arcs,
\(\lambda\in\Lambda\). Moreover,
\[
\lambda\leq U(Q).
\]
By Lemma~\ref{lem:filtered_bounds},
\[
L(Q)+\underline{L}_{\lambda}(v)
\leq L(P)
\]
and
\[
C(Q)+\underline{C}_{\lambda}(v)
\leq C(P)\leq B.
\]
Because \(P\) is potentially feasible,
\[
L(P)+\left\lceil\frac{d}{M(P)}\right\rceil
=
L(P)+\left\lceil\frac{d}{\lambda}\right\rceil
\leq T.
\]
Consequently,
\[
L(Q)+\underline{L}_{\lambda}(v)
+\left\lceil\frac{d}{\lambda}\right\rceil
\leq T.
\]
Thus, \(\lambda\in\mathcal{K}(Q,v)\), contradicting
\(\mathcal{K}(Q,v)=\varnothing\).
\end{proof}

The converse of Theorem~\ref{thm:capacity_lookahead} does not necessarily hold. In particular,
\(\mathcal{K}(Q,v)\neq\varnothing\) is a necessary but not sufficient condition for a feasible completion. The shortest-lead-time and least-cost suffixes for a given capacity level may be different, and either suffix may visit a node already present in \(Q\). This optimism is intentional: it ensures that the look-ahead rule is safe.

The basic and capacity-aware tests can be applied successively. The tests in Proposition~\ref{prop:basic_pruning} are evaluated first because they require only constant time after preprocessing. The more expensive search over
\(\mathcal{K}(Q,v)\) is performed only when the partial path passes the basic tests.

\subsection{Optional Cost--Time Coupled Bound}
\label{subsec:coupled_bound}

The filtered lead-time and cost bounds are computed separately. Therefore, a capacity level may remain in
\(\mathcal{K}(Q,v)\) even if no single suffix can simultaneously attain sufficiently small values of both resources. A stronger test can be obtained by forming a nonnegative weighted combination of lead time and cost.

Let \(\Theta\) be a finite set of nonnegative multipliers. For every
\(\lambda\in\Lambda\), \(\theta\in\Theta\), and \(v\in N\), define
\begin{equation}
H_{\lambda,\theta}(v)
=
\min_{R:v\leadsto n,\ R\subseteq G_\lambda}
\left\{
L(R)+\theta C(R)
\right\}.
\label{eq:coupled_shortest_path}
\end{equation}
This value is found by solving a shortest-path problem in \(G_\lambda\) with arc weights
\[
w_i(\theta)=\ell_i+\theta c_i.
\]

For a partial path \(Q\) and a capacity level \(\lambda\), define the residual lead-time and cost limits
\begin{align}
\tau_\lambda(Q)
&=
T-L(Q)-\left\lceil\frac{d}{\lambda}\right\rceil,
\label{eq:residual_lead_limit}\\
\beta(Q)
&=
B-C(Q).
\label{eq:residual_cost_limit}
\end{align}

\begin{lemma}
\label{lem:coupled_pruning}
If, for some \(\theta\in\Theta\),
\begin{equation}
H_{\lambda,\theta}(v)
>
\tau_\lambda(Q)+\theta\beta(Q),
\label{eq:coupled_pruning_condition}
\end{equation}
then no suffix in \(G_\lambda\) can complete \(Q\) within both the time and budget limits while supporting capacity \(\lambda\).
\end{lemma}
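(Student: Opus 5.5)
The plan is a contrapositive argument: suppose a suffix $R$ in $G_\lambda$ from $v$ to $n$ completes $Q$ within both limits while supporting capacity $\lambda$, and show that this forces
\[
H_{\lambda,\theta}(v)\leq\tau_\lambda(Q)+\theta\beta(Q)
\]
for every $\theta\in\Theta$, not only for the one given in the hypothesis.

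First I will pin down what "completes $Q$ within both the time and budget limits while supporting capacity $\lambda$" means for $P=Q\oplus R$. It means $R\subseteq G_\lambda$, and the transmission time evaluated at capacity $\lambda$ satisfies
\[
L(Q)+L(R)+\left\lceil\frac{d}{\lambda}\right\rceil\leq T .
\]
It also means the budget condition $C(Q)+C(R)\leq B$ holds, using the equivalence $dC(P)\leq b \iff C(P)\leq B$ from \eqref{eq:unit_path_budget}. By the definitions \eqref{eq:residual_lead_limit} and \eqref{eq:residual_cost_limit}, these two inequalities read exactly
\[
L(R)\leq\tau_\lambda(Q)\quad\text{and}\quad C(R)\leq\beta(Q).
\]

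Next, fix any $\theta\in\Theta$. Since $\theta\geq 0$, multiplying the cost inequality by $\theta$ keeps its direction. Adding it to the lead-time inequality gives
\[
L(R)+\theta C(R)\leq\tau_\lambda(Q)+\theta\beta(Q).
\]
Now, $R$ is a $v$--$n$ path in $G_\lambda$, so it is one of the candidates in the minimization \eqref{eq:coupled_shortest_path}. Hence $H_{\lambda,\theta}(v)\leq L(R)+\theta C(R)$. This holds even though $R$ must avoid the nodes of $Q$: that extra requirement only shrinks the class of admissible suffixes, so the unrestricted minimum stays a valid lower bound, exactly as remarked in Subsection~\ref{subsec:filtered_networks}.

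Combining the two inequalities gives $H_{\lambda,\theta}(v)\leq\tau_\lambda(Q)+\theta\beta(Q)$, which contradicts \eqref{eq:coupled_pruning_condition}. The case where no $v$--$n$ path exists in $G_\lambda$, so that $H_{\lambda,\theta}(v)=+\infty$, is trivial: then no such suffix exists at all.

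There is no real obstacle here. The only points needing care are these:
\begin{itemize}
\item the nonnegativity of $\theta$, without which the weighted sum could reverse the inequality;
\item interpreting "supporting capacity $\lambda$" so that the time requirement is written with $\lceil d/\lambda\rceil$. This is valid because a suffix in $G_\lambda$ combined with $\lambda\leq U(Q)$ attains bottleneck capacity at least $\lambda$. Conversely, if the true bottleneck were larger, the relevant test would be the one at that level, as in Theorem~\ref{thm:capacity_lookahead}.
\end{itemize}
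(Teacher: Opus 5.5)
Your proposal is correct and follows essentially the same argument as the paper: assume a suffix $R$ in $G_\lambda$ with $L(R)\le\tau_\lambda(Q)$ and $C(R)\le\beta(Q)$, add the two inequalities with weight $\theta\ge 0$, and use $H_{\lambda,\theta}(v)\le L(R)+\theta C(R)$ to contradict the pruning condition. The paper leaves implicit the points you spell out: how the verbal conclusion translates into these two residual inequalities, why restricting to simple suffixes is harmless, and the case $H_{\lambda,\theta}(v)=+\infty$.
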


\begin{proof}
Suppose that a suffix \(R\) in \(G_\lambda\) satisfies
\[
L(R)\leq\tau_\lambda(Q)
\quad\text{and}\quad
C(R)\leq\beta(Q).
\]
Since \(\theta\geq0\),
\[
L(R)+\theta C(R)
\leq
\tau_\lambda(Q)+\theta\beta(Q).
\]
However, the definition of \(H_{\lambda,\theta}(v)\) implies
\[
H_{\lambda,\theta}(v)
\leq L(R)+\theta C(R),
\]
which contradicts \eqref{eq:coupled_pruning_condition}.
\end{proof}

Accordingly, the capacity set in
\eqref{eq:admissible_capacity_set} can optionally be strengthened as
\begin{equation}
\mathcal{K}^{+}(Q,v)
=
\left\{
\lambda\in\mathcal{K}(Q,v):
H_{\lambda,\theta}(v)
\leq
\tau_\lambda(Q)+\theta\beta(Q)
\text{ for every }\theta\in\Theta
\right\}.
\label{eq:strengthened_capacity_set}
\end{equation}
If \(\mathcal{K}^{+}(Q,v)=\varnothing\), the partial path can be safely discarded.

The set \(\Theta\) controls the tradeoff between preprocessing effort and pruning strength. The case \(\Theta=\varnothing\) gives the core algorithm based only on
\(\mathcal{K}(Q,v)\). A small set of multipliers can provide additional pruning at a moderate cost. Since the bound is valid for every nonnegative multiplier, the use or omission of this enhancement does not affect the correctness of the algorithm.

An exact resource-constrained shortest-path calculation could provide an even stronger coupled test. However, its computational cost may be considerable and could transfer too much effort from enumeration to preprocessing. The weighted bounds in
\eqref{eq:coupled_shortest_path} preserve the simplicity of ordinary shortest-path computations and are therefore more suitable as an optional enhancement.

\subsection{Network Preprocessing and Reduction}
\label{subsec:network_reduction}

Before beginning the path enumeration, several safe preprocessing operations can be performed.

First, nodes and arcs that do not belong to any source--destination path are removed by forward and backward reachability searches. For a directed network, a retained node must be reachable from node \(1\) and must also be able to reach node \(n\). An arc \((u,v)\) can be retained only if \(u\) is reachable from the source and the destination is reachable from \(v\).

Second, the values
\[
\underline{L}(v),\quad
\underline{C}(v), \text{ and}\quad 
W(v)
\]
are calculated for all nodes. If the initial partial path \(Q=(1)\) fails Proposition~\ref{prop:basic_pruning}, then the instance is infeasible and
\[
\Psi^{\min}_{d,T,b}=\varnothing.
\]

Third, the filtered bounds
\(\underline{L}_{\lambda}(v)\) and
\(\underline{C}_{\lambda}(v)\) are calculated for every
\(\lambda\in\Lambda\). If the optional coupled test is activated,
\(H_{\lambda,\theta}(v)\) is also calculated.

A further arc-reduction test can be applied by computing source-based filtered distances. For each \(\lambda\in\Lambda\), let
\begin{align}
\underline{L}^{s}_{\lambda}(u)
&=
\min\{L(Q):Q\text{ is a }1\text{--}u
\text{ path in }G_\lambda\},
\label{eq:source_lead_bound}\\
\underline{C}^{s}_{\lambda}(u)
&=
\min\{C(Q):Q\text{ is a }1\text{--}u
\text{ path in }G_\lambda\}.
\label{eq:source_cost_bound}
\end{align}

Consider a directed arc \(a_i=(u,v)\). Define
\begin{equation}
\begin{split}
\Lambda_i^{\mathrm{arc}}
=
\biggl\{
\lambda\in\Lambda:\;&
\lambda\leq M_i,\\
&
\underline{L}^{s}_{\lambda}(u)
+\ell_i+\underline{L}_{\lambda}(v)
+\left\lceil\frac{d}{\lambda}\right\rceil
\leq T,\\
&
\underline{C}^{s}_{\lambda}(u)
+c_i+\underline{C}_{\lambda}(v)
\leq B
\biggr\}.
\end{split}
\label{eq:arc_capacity_set}
\end{equation}

\begin{proposition}
\label{prop:arc_reduction}
If
\[
\Lambda_i^{\mathrm{arc}}=\varnothing,
\]
then arc \(a_i\) cannot belong to a potentially
\((d,T,b)\)-feasible path and may be removed.
\end{proposition}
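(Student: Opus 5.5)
Argue by contrapositive, mirroring the proof of Theorem~\ref{thm:capacity_lookahead}. Suppose arc \(a_i=(u,v)\) belongs to some potentially \((d,T,b)\)-feasible path \(P\); we exhibit a level \(\lambda\in\Lambda_i^{\mathrm{arc}}\). Split \(P\) at the arc as \(P=Q\oplus a_i\oplus R\), where \(Q\) is the prefix from node \(1\) to \(u\) and \(R\) is the suffix from \(v\) to \(n\). Take \(\lambda=M(P)=\min_{a_j\in P}M_j\). Since \(M(P)\geq\alpha(P)\geq1\), this minimum is positive and equals the maximum capacity of one of the arcs of \(P\), so \(\lambda\in\Lambda\). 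Also, \(a_i\in P\), hence \(\lambda\leq M_i\), which gives the first condition in \eqref{eq:arc_capacity_set}.

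Next, every arc of \(P\) has maximum capacity at least \(\lambda\). Hence \(Q\) is a \(1\)--\(u\) path in \(G_\lambda\) and \(R\) is a \(v\)--\(n\) path in \(G_\lambda\). By the definitions \eqref{eq:source_lead_bound}, \eqref{eq:source_cost_bound}, \eqref{eq:filtered_lead_bound}, \eqref{eq:filtered_cost_bound} (the suffix part is exactly Lemma~\ref{lem:filtered_bounds}),
\[
\underline{L}^{s}_{\lambda}(u)\leq L(Q),\quad
\underline{C}^{s}_{\lambda}(u)\leq C(Q),\quad
\underline{L}_{\lambda}(v)\leq L(R),\quad
\underline{C}_{\lambda}(v)\leq C(R).
\]
In particular, all four quantities are finite. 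Adding these bounds, and using additivity of lead time and cost along \(P\), gives
\[
\underline{L}^{s}_{\lambda}(u)+\ell_i+\underline{L}_{\lambda}(v)\leq L(P),\qquad
\underline{C}^{s}_{\lambda}(u)+c_i+\underline{C}_{\lambda}(v)\leq C(P).
\]

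Finally, potential feasibility \eqref{eq:potential_path_feasibility} gives \(C(P)\leq B\), which settles the cost condition. For the time condition, we have \(L(P)<T\) and \(\alpha(P)\leq M(P)=\lambda\). Arguing as in Proposition~\ref{prop:path_feasibility}, and using that \(T-L(P)\) is a positive integer, \(\lambda\geq\lceil d/(T-L(P))\rceil\) implies \(\lceil d/\lambda\rceil\leq T-L(P)\). Therefore
\[
\underline{L}^{s}_{\lambda}(u)+\ell_i+\underline{L}_{\lambda}(v)+\lceil d/\lambda\rceil\leq T.
\]
Thus \(\lambda\in\Lambda_i^{\mathrm{arc}}\), contradicting the assumption that this set is empty.

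The step needing the most care is this conversion from \(\alpha(P)\leq\lambda\) to \(\lceil d/\lambda\rceil\leq T-L(P)\), which relies on the integrality of \(T-L(P)\). A second point of care is the undirected case. There the arc must be considered in the orientation in which \(P\) traverses it, so the argument applies with \((u,v)\) replaced by the traversal orientation. The test should therefore be read as requiring the set to be empty for both orientations before the arc is removed.
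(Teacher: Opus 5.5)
Your proof is correct and follows the paper's argument. You take \(\lambda=M(P)\), observe that the prefix, the arc \(a_i\), and the suffix all lie in \(G_\lambda\), bound their lead times and costs by the source- and destination-based filtered distances, and use potential feasibility to conclude \(\lambda\in\Lambda_i^{\mathrm{arc}}\). Beyond the paper, you make explicit the steps it leaves implicit: that \(\lambda\in\Lambda\) and \(\lambda\le M_i\), the integrality argument turning \(\alpha(P)\le\lambda\) into \(\lceil d/\lambda\rceil\le T-L(P)\), and the orientation issue for undirected arcs.
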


\begin{proof}
Suppose that a potentially feasible path \(P\) contains \(a_i=(u,v)\), and let
\(\lambda=M(P)\). The prefix of \(P\) ending at \(u\), the arc \(a_i\), and the suffix beginning at \(v\) all belong to \(G_\lambda\). Their lead times and costs are therefore bounded from below by the corresponding source- and destination-based distances. Since \(P\) satisfies the time and budget restrictions, it follows that
\(\lambda\in\Lambda_i^{\mathrm{arc}}\), contradicting
\(\Lambda_i^{\mathrm{arc}}=\varnothing\).
\end{proof}

For an undirected arc \(\{u,v\}\), the test is performed in both orientations. The arc is removed only if it fails in both directions. The reduction is safe even if the paths producing the source- and destination-based bounds overlap or contain repeated nodes, because such paths only make the lower bounds more optimistic.

To keep the preprocessing cost controlled, the arc-reduction test can be performed once using distances calculated in the original filtered networks. Recomputing the distances after removing arcs and repeating the test until no additional reduction occurs is also exact, but its additional computational cost should be evaluated experimentally.

\subsection{The Proposed Algorithm}
\label{subsec:proposed_algorithm}

The proposed resource-aware look-ahead algorithm, abbreviated as RALA, consists of a preprocessing phase followed by a depth-first enumeration phase. It may use the NCM representation of
\cite{forghani2023using}; however, an arc-indexed adjacency list is preferable because it avoids the zero padding of the NCM and directly identifies the lead time, cost, and maximum capacity of every candidate arc. Both representations generate the same search tree.

The algorithm maintains the following information for each partial path \(Q=(1,\ldots,v)\):

\begin{itemize}
    \item the ordered node sequence of \(Q\);
    \item its arc set;
    \item its accumulated lead time \(L(Q)\);
    \item its accumulated unit cost \(C(Q)\);
    \item its current maximum bottleneck capacity \(U(Q)\); and
    \item a Boolean array indicating the nodes already contained in \(Q\).
\end{itemize}

The Boolean array prevents repeated nodes and guarantees that every constructed source--destination path is minimal. 

The algorithm can be used in three modes. RALA-B applies only the basic tests in Proposition~\ref{prop:basic_pruning}; RALA-C additionally applies the capacity-aware test in Theorem~\ref{thm:capacity_lookahead}; and the optional coupled mode further strengthens RALA-C by using \(\mathcal{K}^{+}(Q,v)\). For use in the pseudocode, define
\[
\operatorname{PassLevels}(Q,v)
=
\begin{cases}
\textnormal{true}, & \textnormal{for RALA-B},\\
\mathcal{K}(Q,v)\neq\varnothing, & \textnormal{for RALA-C},\\
\mathcal{K}^{+}(Q,v)\neq\varnothing, & \textnormal{when the coupled test is activated}.
\end{cases}
\label{eq:levels_function}
\]

The proposed method is presented in two parts for clarity. 
Algorithm~\ref{alg:rala} performs the preprocessing and initializes the
search, whereas Algorithm~\ref{alg:rala_extend} gives the recursive
procedure used to construct and examine the partial paths.

\begin{algorithm}[htbp]
\caption{Resource-aware look-ahead algorithm (RALA)}
\label{alg:rala}
\begin{algorithmic}[1]
\Require Network \(G\), demand \(d>0\), time limit \(T\), and budget \(b\)
\Ensure The set \(\Psi^{\min}_{d,T,b}\) of all lower boundary vectors

\State \(B\gets b/d\)
\State Remove nodes and arcs not belonging to any source--destination path
\State Compute \(\underline{L}(v)\), \(\underline{C}(v)\), and \(W(v)\) for all \(v\in N\)

\If{the initial path \(Q=(1)\) fails Proposition~\ref{prop:basic_pruning}}
    \State \Return \(\varnothing\)
\EndIf

\If{the capacity-aware test, coupled test, or arc-reduction phase is activated}
    \State Construct the set \(\Lambda\) of distinct positive capacity levels
    \ForAll{\(\lambda\in\Lambda\)}
        \State Construct \(G_\lambda\)
        \State Compute \(\underline{L}_{\lambda}(v)\) and
        \(\underline{C}_{\lambda}(v)\) for all \(v\in N\)
        \If{the coupled test is activated}
            \State Compute \(H_{\lambda,\theta}(v)\) for all
            \(v\in N\) and \(\theta\in\Theta\)
        \EndIf
    \EndFor
\EndIf

\If{the arc-reduction phase is activated}
    \State Compute the source-based filtered distances
    \State Remove every arc satisfying
    Proposition~\ref{prop:arc_reduction}
\EndIf

\State \(\Psi^{\min}_{d,T,b}\gets\varnothing\)
\State \(Q\gets(1)\), \(L_Q\gets0\), \(C_Q\gets0\), and \(U_Q\gets+\infty\)
\State Mark node \(1\) as visited
\State \Call{Extend}{$1,Q,L_Q,C_Q,U_Q$}
\State \Return \(\Psi^{\min}_{d,T,b}\)
\end{algorithmic}
\end{algorithm}


\begin{algorithm}[htbp]
\caption{Recursive extension procedure used by RALA}
\label{alg:rala_extend}
\begin{algorithmic}[1]
\Require Current node \(v\), partial path \(Q\), accumulated lead time \(L_Q\),
accumulated unit cost \(C_Q\), and bottleneck capacity \(U_Q\)
\Ensure The set \(\Psi^{\min}_{d,T,b}\) is updated whenever a new solution is found

\Procedure{Extend}{$v,Q,L_Q,C_Q,U_Q$}

    \If{\(v=n\)}
        \State \(\alpha\gets
        \left\lceil d/(T-L_Q)\right\rceil\)
        \If{\(L_Q<T\), \(C_Q\leq B\), and \(\alpha\leq U_Q\)}
            \State Construct \(\boldsymbol{x}^{Q}\) using
            \eqref{eq:path_boundary_vector}
            \State Add \(\boldsymbol{x}^{Q}\) to
            \(\Psi^{\min}_{d,T,b}\)
        \EndIf
        \State \Return
    \EndIf

    \If{\(Q\) fails Proposition~\ref{prop:basic_pruning}}
        \State \Return
    \EndIf

    \If{\(\operatorname{PassLevels}(Q,v)=\textnormal{false}\)}
        \State \Return
    \EndIf

    \ForAll{outgoing arcs \(a_i=(v,w)\)}
        \If{\(w\) is not visited}
            \State \(L'\gets L_Q+\ell_i\)
            \State \(C'\gets C_Q+c_i\)
            \State \(U'\gets\min\{U_Q,M_i\}\)

            \If{\(L'<T\), \(C'\leq B\), and \(U'>0\)}
                \State Append \(a_i\) and \(w\) to \(Q\)
                \State Mark \(w\) as visited
                \State \Call{Extend}{$w,Q,L',C',U'$}
                \State Mark \(w\) as unvisited
                \State Remove \(a_i\) and \(w\) from \(Q\)
            \EndIf
        \EndIf
    \EndFor

\EndProcedure
\end{algorithmic}
\end{algorithm}

In RALA-C and the coupled mode, Proposition~\ref{prop:basic_pruning} should be applied before scanning the capacity levels. Furthermore, the capacity levels larger than \(U_Q\) can be skipped immediately. If the levels are stored in increasing order, the scan can stop after reaching the first value exceeding \(U_Q\). RALA-B performs no capacity-level scan.

The algorithm stores \(\boldsymbol{x}^{Q}\) as a full \(m\)-component state vector only when a solution is found. During the search, maintaining the path arcs, \(L_Q\), \(C_Q\), and \(U_Q\) is sufficient. The bottleneck value can be updated in constant time when extending the path. Since recursion automatically restores the previous value during backtracking, it is unnecessary to recalculate the bottleneck over the entire partial path.

\subsection{Correctness Results}
\label{subsec:correctness}

We first establish that the algorithm never returns an unsuitable vector.

\begin{theorem}
\label{thm:rala_soundness}
Every state vector generated by any mode of RALA (Algorithms~\ref{alg:rala} and~\ref{alg:rala_extend})
is a \((d,T,b)\) lower boundary vector.
\end{theorem}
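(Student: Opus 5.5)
The argument traces a single output back to the moment it is produced. By Algorithm~\ref{alg:rala_extend}, a vector is added to \(\Psi^{\min}_{d,T,b}\) only in the branch \(v=n\), and only after the test \(L_Q<T\), \(C_Q\leq B\), \(\alpha\leq U_Q\) has succeeded. Here \(Q\) is the current partial path. The plan is therefore to show two things. First, at that moment \(Q\) is a potentially \((d,T,b)\)-feasible minimal path \(P\), and the quantities \(L_Q\), \(C_Q\), \(U_Q\), \(\alpha\) coincide with \(L(P)\), \(C(P)\), \(M(P)\), \(\alpha(P)\). Second, the constructed vector is exactly \(\boldsymbol{x}^{P}\) from \eqref{eq:path_boundary_vector}. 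Proposition~\ref{prop:path_to_boundary} then gives the conclusion immediately. The pruning tests and the three modes play no role in soundness, since they only remove branches and never create outputs, so the argument holds uniformly for RALA-B, RALA-C and the coupled mode.

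\emph{Invariant.} First I would prove, by induction on the recursion depth, that every call \Call{Extend}{$v,Q,L_Q,C_Q,U_Q$} satisfies the following.
\begin{itemize}
\item \(Q\) is a simple directed (or validly oriented) path from node \(1\) to \(v\) in the network retained after preprocessing, which is a subnetwork of \(G\).
\item The visited array marks exactly the nodes of \(Q\).
\item \(L_Q=L(Q)\), \(C_Q=C(Q)\) and \(U_Q=U(Q)\), with \(U=+\infty\) for the empty path.
\end{itemize}
The base case is the initial call with \(Q=(1)\), \(L_Q=C_Q=0\), \(U_Q=+\infty\) and node \(1\) marked. The inductive step follows from the update rules \(L'=L_Q+\ell_i\), \(C'=C_Q+c_i\), \(U'=\min\{U_Q,M_i\}\). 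Simplicity is preserved because \(w\) is appended only when it is unvisited. The unmark/remove operations after the recursive call restore the caller's state, so the invariant also holds for later siblings. Arc removal during preprocessing only deletes arcs, so every path built is still a path of \(G\).

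\emph{Minimality.} When \(v=n\), \(Q\) is a simple \(1\)--\(n\) path. I then need that a node-simple path is a minimal path in the sense of Subsection~\ref{subsec:network_model}, i.e.\ that no proper subset of its arcs contains a \(1\)--\(n\) path. Suppose a proper subset did contain one. That subpath would start at \(1\) and use arcs of \(Q\), traversed consistently with their orientation. Because each node of \(Q\) has exactly one outgoing arc in \(Q\) (except \(n\)), following arcs from \(1\) within \(Q\) is forced and would use all arcs of \(Q\) before reaching \(n\), a contradiction. In the undirected case the same argument works using degrees: every internal node has degree \(2\) in \(Q\), so any \(1\)--\(n\) path in the arc subset must traverse the whole of \(Q\). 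I expect this step to be the main point needing care, since the paper merely asserts it, and the undirected case requires the degree argument rather than orientation.

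\emph{Conclusion.} With \(P=Q\) minimal, the passed test reads \(L(P)<T\), \(C(P)\leq B\) (equivalently \(dC(P)\leq b\)), and \(\alpha(P)\leq U(P)=M(P)\). These are exactly conditions \eqref{eq:potential_path_feasibility}, so \(P\) is potentially feasible. The assigned \(\alpha\) is \(\lceil d/(T-L(P))\rceil=\alpha(P)\), which is well defined because \(L(P)<T\), and the vector is built by \eqref{eq:path_boundary_vector}. Hence Proposition~\ref{prop:path_to_boundary} shows that it is a \((d,T,b)\) lower boundary vector.
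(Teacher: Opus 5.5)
Your proposal is correct and follows the same route as the paper's proof. A vector is produced only at node $n$, after the test $L_Q<T$, $C_Q\leq B$, $\alpha\leq U_Q$ succeeds; the visited-node array makes the path simple and hence minimal; so it is potentially feasible by \eqref{eq:potential_path_feasibility}, and Proposition~\ref{prop:path_to_boundary} finishes the argument, while your induction invariant on $(L_Q,C_Q,U_Q)$ and your explicit simple-implies-minimal argument (including the undirected degree case) make rigorous steps the paper asserts in a single line.
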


\begin{proof}
A vector is generated only after the search reaches node \(n\). The visited-node condition ensures that the resulting source--destination path \(P\) is simple and hence minimal. Before constructing the vector, the algorithm verifies
\[
L(P)<T,\qquad C(P)\leq B,
\quad\text{and}\quad
\alpha(P)\leq U(P)=M(P).
\]
Therefore, \(P\) is potentially feasible according to
\eqref{eq:potential_path_feasibility}. Proposition~\ref{prop:path_to_boundary} then implies that the vector constructed through
\eqref{eq:path_boundary_vector} is a lower boundary vector.
\end{proof}

The next result establishes completeness.

\begin{theorem}
\label{thm:rala_completeness}
Every mode of RALA generates every vector in \(\Psi^{\min}_{d,T,b}\).
\end{theorem}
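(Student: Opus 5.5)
Fix an arbitrary \(\boldsymbol{x}\in\Psi^{\min}_{d,T,b}\). By Proposition~\ref{prop:boundary_to_path}, \(\boldsymbol{x}=\boldsymbol{x}^{P}\) for some potentially feasible minimal path \(P=(1=v_0,a_{i_1},v_1,\ldots,a_{i_k},v_k=n)\). We may take \(P\) simple: a minimal path cannot repeat a node, since a repeated node yields a cycle whose removal gives a proper subset that is still a source--destination path. It therefore suffices to show that the search tree of every mode reaches the leaf corresponding to \(P\), and that at that leaf the acceptance test succeeds. The argument is an induction on the prefixes \(Q_j=(v_0,\ldots,v_j)\), \(j=0,\ldots,k\). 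The claim is that \textsc{Extend} is called with arguments \((v_j,Q_j,L(Q_j),C(Q_j),U(Q_j))\) and that \(v_j\) is the last node marked visited, all earlier \(v_0,\ldots,v_{j-1}\) being marked.

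First, preprocessing must not destroy \(P\). Every node and arc of \(P\) lies on a source--destination path, so the reachability reduction keeps them. If arc reduction is active, Proposition~\ref{prop:arc_reduction} shows that no arc of \(P\) is removed. For an undirected arc, the orientation used by \(P\) passes the test, which is enough. The initial infeasibility check is the case \(Q=(1)\) of the pruning tests, and it is handled by the inductive step below. The base case \(j=0\) then holds directly from the initialization lines of Algorithm~\ref{alg:rala}.

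For the inductive step, take \(j<k\) and assume \textsc{Extend} is called on \(Q_j\). Since \(v_j\neq n\) (a simple path reaches \(n\) only at its end), the procedure proceeds to the pruning tests. Proposition~\ref{prop:basic_pruning} is stated contrapositively: \(Q_j\) has the potentially feasible completion \(P\), so none of the conditions \eqref{eq:basic_test_unreachable}--\eqref{eq:basic_test_time} can hold and the basic test passes. Theorem~\ref{thm:capacity_lookahead} gives \(\mathcal{K}(Q_j,v_j)\neq\varnothing\). More precisely, its proof exhibits \(\lambda=M(P)\in\mathcal{K}(Q_j,v_j)\). For the coupled mode, Lemma~\ref{lem:coupled_pruning} shows that this same \(\lambda\) satisfies the inequality for every \(\theta\). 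Indeed, the suffix \(R\) of \(P\) from \(v_j\) lies in \(G_\lambda\) and satisfies \(L(R)\le\tau_\lambda(Q_j)\) and \(C(R)\le\beta(Q_j)\), both from the feasibility of \(P\). Hence \(\lambda\in\mathcal{K}^{+}\) and \(\operatorname{PassLevels}\) is true in every mode.

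In the arc loop, the arc \(a_{i_{j+1}}=(v_j,v_{j+1})\) is scanned, and \(v_{j+1}\) is unvisited because \(P\) is simple. The local test passes for three reasons. First, \(L(Q_{j+1})\le L(P)<T\) because lead times are nonnegative. Second, \(C(Q_{j+1})\le C(P)\le B\) because costs are nonnegative. Third, \(U(Q_{j+1})\ge M(P)\ge\alpha(P)\ge1>0\). So the recursive call on \(Q_{j+1}\) is made, which completes the induction. At \(j=k\) the leaf test checks \(L(P)<T\), \(C(P)\le B\) and \(\alpha(P)\le U(P)=M(P)\), all of which hold, so \(\boldsymbol{x}^{P}\) is added.

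The step I expect to need the most care is the coupled mode, because \(\mathcal{K}^{+}\) requires a single \(\lambda\) that works for all \(\theta\) at once. This is why I would explicitly use \(\lambda=M(P)\) rather than merely invoking nonemptiness of \(\mathcal{K}\). A second point to state explicitly is that the visited marks do not block \(P\): the array marks exactly the nodes of the current prefix, so no node of \(P\)'s suffix is marked. Finally, although not strictly required for completeness, I would note that distinct simple paths correspond to distinct root-to-leaf branches. Combined with \(\boldsymbol{x}^{P}\neq\boldsymbol{x}^{P'}\) for distinct minimal paths (their supports differ), this shows each vector is produced exactly once.
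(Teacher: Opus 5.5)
Your proof is correct and follows the paper's route. Like the paper, you write \(\boldsymbol{x}=\boldsymbol{x}^{P}\) using Proposition~\ref{prop:boundary_to_path}, then show no prefix of \(P\) is pruned via Proposition~\ref{prop:basic_pruning}, Theorem~\ref{thm:capacity_lookahead}, Lemma~\ref{lem:coupled_pruning} with \(\lambda=M(P)\), and Proposition~\ref{prop:arc_reduction}; your induction on prefixes also makes explicit the in-loop checks \(L'<T\), \(C'\le B\), \(U'>0\) and the leaf acceptance test, which the paper's proof leaves implicit.
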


\begin{proof}
Let
\(\boldsymbol{x}\in\Psi^{\min}_{d,T,b}\). By
Proposition~\ref{prop:boundary_to_path}, there exists a potentially feasible minimal path \(P\) such that
\(\boldsymbol{x}=\boldsymbol{x}^{P}\).

The depth-first procedure considers every simple source--destination path unless one of its prefixes is pruned. Consider an arbitrary prefix \(Q=(1,\ldots,v)\) of \(P\). Since \(P\) is a potentially feasible completion of \(Q\), Proposition~\ref{prop:basic_pruning} cannot discard \(Q\). This establishes the safety of every pruning step performed by RALA-B. If RALA-C is used, Theorem~\ref{thm:capacity_lookahead} implies
\[
\mathcal{K}(Q,v)\neq\varnothing.
\]
If the optional coupled bound is also used, Lemma~\ref{lem:coupled_pruning} guarantees that the capacity level \(M(P)\) satisfies every coupled inequality and therefore belongs to \(\mathcal{K}^{+}(Q,v)\). Consequently, \(\operatorname{PassLevels}(Q,v)\) is true in every mode.
Finally, if the optional arc-reduction phase is activated, Proposition~\ref{prop:arc_reduction} ensures that no arc of \(P\) is removed during preprocessing.

Consequently, no prefix of \(P\) is pruned, and the depth-first search eventually reaches the destination through \(P\). At that point, the algorithm constructs
\(\boldsymbol{x}^{P}=\boldsymbol{x}\).
\end{proof}

\begin{theorem}
\label{thm:rala_no_duplicates}
No mode of RALA generates duplicate lower boundary vectors.
\end{theorem}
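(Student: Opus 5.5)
The plan has two parts: show that the depth-first search reaches each simple source--destination path at most once, and show that different paths reaching the output step produce different vectors.

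\emph{Each path is reached at most once.} In Algorithm~\ref{alg:rala_extend}, every invocation of \textsc{Extend} corresponds to exactly one partial path $Q$, namely the ordered arc sequence on the recursion stack. Argue by induction on the recursion depth that distinct invocations carry distinct partial paths. Two children of the same invocation differ in the last appended arc, since each outgoing arc is scanned once in the \textbf{for} loop. Invocations in different subtrees differ in the first position where their ancestor chains diverge. Consequently, a given simple path $P$ triggers the output step at most once. In the undirected case, traversal is always away from node $1$ and no node is revisited, so each path is still a unique ordered sequence.

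\emph{Distinct paths give distinct vectors.} Suppose two distinct simple paths $P\neq P'$ both reach the output step, so $\boldsymbol{x}^{P}$ and $\boldsymbol{x}^{P'}$ are constructed. By the check performed there, both paths are potentially feasible, so $\alpha(P),\alpha(P')\geq 1$ (recall $d>0$). The support of $\boldsymbol{x}^{P}$ is therefore exactly the arc set of $P$, and likewise for $P'$. If the vectors coincided, $P$ and $P'$ would have the same arc set. A minimal path is determined by its arc set: starting from node $1$, a simple path contains exactly one arc leaving each of its non-terminal nodes, so the sequence can be reconstructed uniquely. Hence $P=P'$, a contradiction.

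The step that needs care is the last one, the passage from equal supports to equal paths. Positivity of $\alpha$ guarantees that the support of $\boldsymbol{x}^{P}$ equals the arc set of $P$. The reconstruction argument then shows that the arc set of a simple $1$--$n$ path determines its order. Since the argument concerns only the search tree and the output step, it applies uniformly to RALA-B, RALA-C, and the coupled mode: pruning only removes subtrees and cannot create repeated visits.
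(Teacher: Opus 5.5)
Your proposal is correct and follows essentially the same route as the paper. The depth-first search produces each simple path at most once because each outgoing arc is scanned once per invocation, and since $d>0$ forces $\alpha(P)\geq 1$, the support of $\boldsymbol{x}^{P}$ is exactly the arc set of $P$, so distinct paths yield distinct vectors; you simply make explicit two points the paper takes for granted, namely the induction over the recursion tree and the fact that a simple $1$--$n$ path is determined by its arc set.
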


\begin{proof}
The recursive search fixes the order of the arcs from the source to the destination, and each outgoing arc is examined once whenever its tail is the current node. Therefore, every simple path is generated at most once.

Moreover, \(d>0\) implies \(\alpha(P)\geq1\). Hence, the positive components of
\(\boldsymbol{x}^{P}\) identify exactly the arcs of \(P\). Two distinct minimal paths have different arc sets and thus cannot produce the same state vector through
\eqref{eq:path_boundary_vector}. Therefore, no duplicate vector is generated.
\end{proof}

Theorems~\ref{thm:rala_soundness}--\ref{thm:rala_no_duplicates} establish the principal correctness result.

\begin{theorem}
\label{thm:rala_correctness}
Every mode of RALA generates the complete set \(\Psi^{\min}_{d,T,b}\) exactly and without duplicates.
\end{theorem}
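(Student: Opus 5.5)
The proof assembles the three preceding results; no new argument about the search tree is needed. Let \(\mathcal{O}\) denote the collection of vectors output by a fixed mode of RALA, counted with multiplicity. The statement has three parts: every element of \(\mathcal{O}\) lies in \(\Psi^{\min}_{d,T,b}\), every element of \(\Psi^{\min}_{d,T,b}\) appears in \(\mathcal{O}\), and no vector appears twice. Together these say that \(\mathcal{O}\), viewed as a set, equals \(\Psi^{\min}_{d,T,b}\) and that its size is exactly \(\sigma\).

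First, Theorem~\ref{thm:rala_soundness} gives \(\mathcal{O}\subseteq\Psi^{\min}_{d,T,b}\) for every mode. That theorem relies on the terminal check in \textsc{Extend} together with Proposition~\ref{prop:path_to_boundary}. Second, Theorem~\ref{thm:rala_completeness} gives the reverse inclusion. It uses the path characterization of Proposition~\ref{prop:boundary_to_path} and the safety of every pruning rule: Proposition~\ref{prop:basic_pruning}, Theorem~\ref{thm:capacity_lookahead}, Lemma~\ref{lem:coupled_pruning}, and Proposition~\ref{prop:arc_reduction} for the optional arc reduction. Third, Theorem~\ref{thm:rala_no_duplicates} shows the multiplicity is one. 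Each simple path is generated at most once, and \(P\mapsto\boldsymbol{x}^{P}\) is injective because \(\alpha(P)\ge1\), so the support of \(\boldsymbol{x}^{P}\) is exactly the arc set of \(P\).

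Two points need checking. The first is that the arc-reduction step, which modifies the network, does not break soundness. A path found in the reduced network is still a simple path of \(G\), and its vector \(\boldsymbol{x}^{P}\) is formed in the original coordinates with zeros on the removed arcs. Proposition~\ref{prop:path_to_boundary} must be applied in \(G\), not in the reduced network. The second is that all three theorems are stated for \emph{every} mode, namely RALA-B, RALA-C, the coupled mode, and each of these with or without arc reduction. So the conclusion holds uniformly, and the collection of outputs equals \(\Psi^{\min}_{d,T,b}\) exactly, with \(|\mathcal{O}|=\sigma\).
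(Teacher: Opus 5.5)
Your proposal is correct and takes the same route as the paper: the paper also obtains this theorem simply as the conjunction of Theorems~\ref{thm:rala_soundness}, \ref{thm:rala_completeness}, and~\ref{thm:rala_no_duplicates}, with no further argument. Your extra remark is a correct clarification that the paper leaves implicit: vectors found in the reduced network are formed in the coordinates of the original \(G\), so Proposition~\ref{prop:path_to_boundary} applies there, and the same holds for the initial reachability-based removal.
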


\subsection{Complexity Analysis}
\label{subsec:complexity}

Let
\[
\kappa=|\Lambda|\leq m
\]
be the number of distinct positive maximum-capacity levels, and let
\[
\rho=|\Theta|
\]
be the number of multipliers used by the optional coupled bound. Let \(\Delta\) denote the maximum out-degree of the network.

The initial reachability searches require
\[
O(n+m)
\]
time. The basic shortest-lead-time, least-cost, and widest-path values can be computed in
\[
O(m+n\log n)
\]
time using adjacency lists and heap-based implementations~\cite{ahuja1993network, fredman1987fibonacci}.

Let \(Z_B\) denote the number of partial paths for which the recursive procedure is invoked by RALA-B. Since RALA-B performs no capacity-level preprocessing or scanning, its total running time is
\begin{equation}
O\bigl(m+n\log n+Z_B\Delta+\sigma m\bigr),
\label{eq:rala_b_total_complexity}
\end{equation}
where the last term accounts for explicitly constructing the \(\sigma\) output vectors.

For each capacity level, RALA-C solves two destination-based shortest-path problems, one using lead times and the other using costs. Therefore, its capacity-filtered preprocessing requires
\begin{equation}
O\bigl(
\kappa(m+n\log n)
\bigr)
\label{eq:core_preprocessing_complexity}
\end{equation}
time. If source-based distances are also calculated for the arc-reduction phase, the number of shortest-path computations is doubled, but the asymptotic order in
\eqref{eq:core_preprocessing_complexity} remains unchanged.

For each pair \((\lambda,\theta)\), the optional coupled bound requires one additional shortest-path computation. Its preprocessing time is consequently
\begin{equation}
O\bigl(
\rho\kappa(m+n\log n)
\bigr).
\label{eq:coupled_preprocessing_complexity}
\end{equation}
Testing all arcs for the reduction condition requires
\[
O\bigl(m\kappa(1+\rho)\bigr)
\]
time once the required distances have been calculated.

The enumeration phase is necessarily output-sensitive. For RALA-C and the optional coupled mode, let \(Z_C\) be the number of partial paths for which the recursive procedure is invoked, including paths subsequently eliminated by a look-ahead test. At each such state, at most \(\Delta\) outgoing arcs are examined. A direct scan of all capacity levels, including the optional coupled tests, requires
\[
O\bigl(\kappa(1+\rho)\bigr)
\]
time. If \(\sigma\) boundary vectors are generated, explicitly constructing their \(m\)-component representations requires \(O(\sigma m)\) time. The total running time is therefore
\begin{equation}
O\Bigl(
\kappa(1+\rho)(m+n\log n)
+
m\kappa(1+\rho)
+
Z_C\bigl[\Delta+\kappa(1+\rho)\bigr]
+
\sigma m
\Bigr).
\label{eq:total_complexity}
\end{equation}
After absorbing the arc-screening term into preprocessing, this expression can be written more compactly as
\begin{equation}
O\Bigl(
\kappa(1+\rho)(m+n\log n)
+
Z_C\bigl[\Delta+\kappa(1+\rho)\bigr]
+
\sigma m
\Bigr).
\label{eq:compact_total_complexity}
\end{equation}

For RALA-C without the optional coupled bound, \(\rho=0\), and the complexity becomes
\begin{equation}
O\Bigl(
\kappa(m+n\log n)
+
Z_C(\Delta+\kappa)
+
\sigma m
\Bigr).
\label{eq:core_total_complexity}
\end{equation}

The quantities \(Z_B\) and \(Z_C\) may be exponential in the worst case because a network can contain exponentially many simple source--destination paths and lower boundary vectors. This behavior cannot be avoided by any exact enumeration algorithm when the output itself is exponential. The purpose of the proposed bounds is to reduce these quantities substantially in practical instances by rejecting an infeasible partial path before enumerating its descendants.

Excluding the stored output vectors, RALA-B requires \(O(m+n)\) working memory for the network representation, visited-node array, basic destination-based bounds, and recursion stack. The memory required by RALA-C to store the filtered lead-time and cost bounds is
\[
O(\kappa n).
\]
The optional coupled values require an additional
\[
O(\rho\kappa n)
\]
memory. For RALA-C and the optional coupled mode, the network representation, visited-node array, and recursion stack require
\[
O(m+n)
\]
space, since a simple path contains at most \(n\) nodes. If all output vectors are retained explicitly, their storage requires
\[
O(\sigma m).
\]
Thus, the total memory requirement is
\begin{equation}
O\bigl(
m+n+\kappa n(1+\rho)+\sigma m
\bigr).
\label{eq:memory_complexity}
\end{equation}
If the boundary vectors are processed or written sequentially rather than retained in memory, the final term can be omitted from the working-memory requirement.

The worst-case order alone does not fully represent the expected advantage of RALA because the existing NCM algorithm and both RALA variants are exponential path-enumeration procedures. Their practical difference lies mainly in the number of partial paths examined. Therefore, the computational experiments report the corresponding search counts, the number of extensions rejected by each pruning rule, preprocessing time, enumeration time, and total running time. These measurements determine whether the reduction in the search tree compensates for the additional preprocessing and per-node look-ahead calculations.

\section{Illustrative Example}
\label{sec:example}

This section illustrates the principal steps of RALA on a small benchmark network. The example shows how the capacity-filtered look-ahead information is calculated, how it is used during path construction, and how the resulting lower boundary vectors can be employed to calculate the exact reliability. To highlight the effect of the new pruning conditions, the optional cost--time coupled bound and the arc-reduction phase are not activated in this example.

\subsection{Benchmark Network and Input Data}
\label{subsec:example_network}

Consider the five-node undirected multistate flow network depicted in Figure~\ref{fig:illustrative_network}. This topology was also considered in \cite{forghani2023using}. Node \(1\) is the source, and node \(5\) is the destination. Although the network is represented as undirected, a constructed path begins at node \(1\), terminates upon reaching node \(5\), and cannot visit any node more than once.
The maximum capacity, lead time, and unit transmission cost of each arc are given by 
\(\boldsymbol{M}=(4,2,5,4,3,3,4,5),\ \boldsymbol{\ell}=(3,1,2,3,4,3,2,3),\) and \(\boldsymbol{c}=(2,3,4,3,2,3,2,1)\).
Suppose that the objective is to transmit
\(
d=7
\)
units of flow from node \(1\) to node \(5\) within
\(
T=9
\)
units of time and at a total cost not exceeding
\(
b=63.
\)
The corresponding maximum unit path cost is
\(
B=\frac{b}{d}=\frac{63}{7}=9.
\)
Therefore, a potentially feasible path \(P\) must satisfy
\(
C(P)\leq9
\)
and
\[
L(P)+\left\lceil\frac{7}{M(P)}\right\rceil\leq9.
\]

\begin{figure}[tp]
\centering
\begin{tikzpicture}[
    scale=1.05,
    every node/.style={transform shape},
    vertex/.style={
        circle,
        draw,
        thick,
        minimum size=8mm,
        inner sep=0pt,
        font=\small
    },
    edgeLabel/.style={
        fill=white,
        inner sep=1.5pt,
        font=\small
    }
]

\node[vertex] (v1) at (0,0) {1};
\node[vertex] (v2) at (2.2,1.8) {2};
\node[vertex] (v3) at (3.0,0) {3};
\node[vertex] (v4) at (2.2,-1.8) {4};
\node[vertex] (v5) at (5.2,0) {5};

\draw[thick] (v1) -- node[edgeLabel,above left] {$a_1$} (v2);
\draw[thick] (v1) -- node[edgeLabel,above left] {$a_2$} (v3);
\draw[thick] (v1) -- node[edgeLabel,below left] {$a_3$} (v4);
\draw[thick] (v2) -- node[edgeLabel,below left] {$a_4$} (v3);
\draw[thick] (v2) -- node[edgeLabel,above right] {$a_5$} (v5);
\draw[thick] (v3) -- node[edgeLabel,below right] {$a_6$} (v4);
\draw[thick] (v3) -- node[edgeLabel,above] {$a_7$} (v5);
\draw[thick] (v4) -- node[edgeLabel,below right] {$a_8$} (v5);

\node[font=\small,left=1mm of v1] {Source};
\node[font=\small,right=1mm of v5] {Destination};

\end{tikzpicture}
\caption{The five-node benchmark network used in the illustrative example.}
\label{fig:illustrative_network}
\end{figure}


For the reliability calculation, assume that the arc capacities are mutually independent and that each capacity level of a given arc has the same probability. That is,
\begin{equation}
\Pr(X_i=k)=\frac{1}{M_i+1},
\qquad
k=0,1,\ldots,M_i.
\label{eq:example_capacity_distribution}
\end{equation}
The uniform distributions in \eqref{eq:example_capacity_distribution} are used only to keep the final probability calculation transparent. The boundary-vector generation phase does not depend on these probabilities.

\subsection{Precomputed Look-Ahead Tables}
\label{subsec:example_tables}

The distinct positive maximum-capacity levels are
\(
\Lambda=\{2,3,4,5\}.
\)
For every \(\lambda\in\Lambda\), the filtered network \(G_\lambda\) contains only the arcs whose maximum capacities are at least \(\lambda\). For example,
\(
A_4=\{a_1,a_3,a_4,a_7,a_8\},
\)
whereas
\(
A_5=\{a_3,a_8\}.
\)

Applying the reverse shortest-path calculations described in
Subsection~\ref{subsec:filtered_networks} gives the filtered lead-time and cost bounds reported in Table~\ref{tab:example_filtered_bounds}. An infinite entry indicates that the destination cannot be reached from the corresponding node in that filtered network.

\begin{table}[htbp]
\centering
\caption{Capacity-filtered destination-based bounds for the illustrative network.}
\label{tab:example_filtered_bounds}
\resizebox{\textwidth}{!}{%
\begin{tabular}{c cc cc cc cc}
\toprule
& \multicolumn{2}{c}{\(\lambda=2\)}
& \multicolumn{2}{c}{\(\lambda=3\)}
& \multicolumn{2}{c}{\(\lambda=4\)}
& \multicolumn{2}{c}{\(\lambda=5\)}\\
\cmidrule(lr){2-3}
\cmidrule(lr){4-5}
\cmidrule(lr){6-7}
\cmidrule(lr){8-9}
Node \(v\)
& \(\underline{L}_{2}(v)\) & \(\underline{C}_{2}(v)\)
& \(\underline{L}_{3}(v)\) & \(\underline{C}_{3}(v)\)
& \(\underline{L}_{4}(v)\) & \(\underline{C}_{4}(v)\)
& \(\underline{L}_{5}(v)\) & \(\underline{C}_{5}(v)\)\\
\midrule
1 & 3 & 4 & 5 & 4 & 5 & 5 & 5 & 5\\
2 & 4 & 2 & 4 & 2 & 6 & 4 & \(\infty\) & \(\infty\)\\
3 & 2 & 2 & 2 & 2 & 2 & 2 & \(\infty\) & \(\infty\)\\
4 & 3 & 1 & 3 & 1 & 3 & 1 & 3 & 1\\
5 & 0 & 0 & 0 & 0 & 0 & 0 & 0 & 0\\
\bottomrule
\end{tabular}%
}
\end{table}

Some entries in Table~\ref{tab:example_filtered_bounds} are attained by the following paths:

\begin{itemize}
    \item For \(\lambda=2\),
    \(\underline{L}_{2}(1)=3\) is attained by
    \(1\text{--}3\text{--}5\), whereas
    \(\underline{C}_{2}(1)=4\) is attained by
    \(1\text{--}2\text{--}5\).

    \item For \(\lambda=3\), arc \(a_2\) is removed, and
    \(\underline{L}_{3}(1)=5\) is attained by
    \(1\text{--}4\text{--}5\).

    \item For \(\lambda=4\), arcs \(a_2\), \(a_5\), and \(a_6\) are unavailable. From node \(2\), the shortest admissible suffix is
    \(2\text{--}4\text{--}5\), with lead time \(6\) and unit cost \(4\).

    \item For \(\lambda=5\), only \(a_3\) and \(a_8\) remain. Consequently, nodes \(2\) and \(3\) cannot reach the destination in \(G_5\).
\end{itemize}

These observations also illustrate why the shortest lead-time and least-cost values should be interpreted as bounds rather than as a selected continuation. For instance,
\(\underline{L}_{2}(1)\) and \(\underline{C}_{2}(1)\) are obtained from different paths. Nevertheless, both values are valid lower bounds on the corresponding resources required by any complete path.

At the initial partial path \(Q=(1)\), we have
\(
L(Q)=C(Q)=0,\text{ and } U(Q)=+\infty.
\)
Using \eqref{eq:admissible_capacity_set}, the four capacity levels satisfy
\begin{align*}
\lambda=2:\quad&
0+3+\left\lceil\frac{7}{2}\right\rceil
=7\leq9,
&
0+4&\leq9,\\
\lambda=3:\quad&
0+5+\left\lceil\frac{7}{3}\right\rceil
=8\leq9,
&
0+4&\leq9,\\
\lambda=4:\quad&
0+5+\left\lceil\frac{7}{4}\right\rceil
=7\leq9,
&
0+5&\leq9,\\
\lambda=5:\quad&
0+5+\left\lceil\frac{7}{5}\right\rceil
=7\leq9,
&
0+5&\leq9.
\end{align*}
Hence,
\(
\mathcal{K}((1),1)=\{2,3,4,5\},
\)
and the search begins.

\subsection{Step-by-Step Execution}
\label{subsec:example_execution}

Suppose that the outgoing arcs are examined in their index order. The principal partial paths investigated by RALA are summarized in Table~\ref{tab:example_execution}. The notation ``solution'' means that the destination has been reached and the corresponding boundary vector is generated.

\begin{table}[htbp]
\centering
\caption{Principal steps of RALA for the illustrative network.}
\label{tab:example_execution}
\resizebox{\textwidth}{!}{%
\begin{tabular}{ccccccl}
\toprule
Partial path \(Q\) & \(L(Q)\) & \(C(Q)\) & \(U(Q)\) &
\(\mathcal{K}(Q,v)\) & Current node & Decision\\
\midrule
\((1)\)             & 0 & 0 & \(\infty\) & \(\{2,3,4,5\}\) & 1 & Extend\\
\((1,2)\)           & 3 & 2 & 4 & \(\varnothing\) & 2 & Prune\\
\((1,3)\)           & 1 & 3 & 2 & \(\{2\}\) & 3 & Extend\\
\((1,3,4)\)         & 4 & 6 & 2 & \(\varnothing\) & 4 & Prune\\
\((1,3,5)\)         & 3 & 5 & 2 & -- & 5 & First solution\\
\((1,4)\)           & 2 & 4 & 5 & \(\{2,3,4,5\}\) & 4 & Extend\\
\((1,4,2)\)         & 5 & 7 & 4 & \(\varnothing\) & 2 & Prune\\
\((1,4,3)\)         & 5 & 7 & 3 & \(\varnothing\) & 3 & Prune\\
\((1,4,5)\)         & 5 & 5 & 5 & -- & 5 & Second solution\\
\bottomrule
\end{tabular}%
}
\end{table}

The first extension adds arc \(a_1\), giving
\[
Q=(1,2),\qquad
L(Q)=3,\qquad C(Q)=2,\qquad U(Q)=4.
\]
This partial path illustrates the additional strength of the capacity-aware test. In fact, the basic destination-based test does not eliminate it. Since
\[
\underline{L}(2)=4,\qquad
\underline{C}(2)=2,\qquad
W(2)=4,
\]
we obtain
\[
L(Q)+\underline{L}(2)
+\left\lceil\frac{d}{\min\{U(Q),W(2)\}}\right\rceil
=
3+4+\left\lceil\frac{7}{4}\right\rceil
=9,
\]
and
\[
C(Q)+\underline{C}(2)=2+2=4\leq9.
\]
Thus, the partial path passes the basic tests.

The capacity-filtered tests, however, give
\begin{align*}
\lambda=2:\quad&
3+\underline{L}_{2}(2)
+\left\lceil\frac{7}{2}\right\rceil
=3+4+4=11>9,\\
\lambda=3:\quad&
3+\underline{L}_{3}(2)
+\left\lceil\frac{7}{3}\right\rceil
=3+4+3=10>9,\\
\lambda=4:\quad&
3+\underline{L}_{4}(2)
+\left\lceil\frac{7}{4}\right\rceil
=3+6+2=11>9.
\end{align*}
The level \(\lambda=5\) is unavailable because
\(\lambda>U(Q)=4\). Therefore,
\[
\mathcal{K}((1,2),2)=\varnothing,
\]
and the entire branch beginning with \(1\text{--}2\) is discarded.

Notice that this conclusion is not obtained by examining only one shortest continuation. Instead, the calculations prove that no continuation can simultaneously provide a suitable lead time and the required bottleneck capacity. In particular, the ordinary widest-path and shortest-lead-time information is overly optimistic in this case because the suffix attaining the widest capacity does not attain the shortest lead time.

After backtracking to the source, the algorithm constructs
\[
Q=(1,3).
\]
For this partial path,
\[
L(Q)=1,\qquad C(Q)=3,\qquad U(Q)=2.
\]
Only \(\lambda=2\) needs to be considered, and
\[
1+\underline{L}_{2}(3)
+\left\lceil\frac{7}{2}\right\rceil
=
1+2+4=7\leq9,
\]
while
\[
3+\underline{C}_{2}(3)=3+2=5\leq9.
\]
Thus,
\[
\mathcal{K}((1,3),3)=\{2\}.
\]
Extending this partial path to node \(4\) gives
\[
Q=(1,3,4),\qquad
L(Q)=4,\qquad C(Q)=6,\qquad U(Q)=2.
\]
Its best possible capacity-filtered completion would require at least
\[
4+\underline{L}_{2}(4)
+\left\lceil\frac{7}{2}\right\rceil
=
4+3+4=11>9.
\]
This branch is therefore discarded.

The next extension reaches the destination through
\[
P^{1}=(1,3,5)=\{a_2,a_7\}.
\]
For this path,
\[
L(P^{1})=1+2=3,\qquad
C(P^{1})=3+2=5,
\]
and
\[
M(P^{1})=\min\{2,4\}=2.
\]
Its minimum required capacity is
\[
\alpha(P^{1})
=
\left\lceil
\frac{7}{9-3}
\right\rceil
=2.
\]
Since
\[
\alpha(P^{1})\leq M(P^{1})
\quad\text{and}\quad
dC(P^{1})=7(5)=35\leq63,
\]
the first lower boundary vector is generated.

The algorithm then examines the branch beginning with
\(a_3\). For
\[
Q=(1,4),
\]
we have
\[
L(Q)=2,\qquad C(Q)=4,\qquad U(Q)=5,
\]
and all four capacity levels remain admissible:
\[
\mathcal{K}((1,4),4)=\{2,3,4,5\}.
\]
The continuations through nodes \(2\) and \(3\) are eliminated by the look-ahead tests, as reported in Table~\ref{tab:example_execution}. The direct continuation through \(a_8\) reaches the destination and produces
\[
P^{2}=(1,4,5)=\{a_3,a_8\}.
\]
For this path,
\[
L(P^{2})=2+3=5,\qquad
C(P^{2})=4+1=5,
\]
and
\[
M(P^{2})=\min\{5,5\}=5.
\]
The required capacity is
\[
\alpha(P^{2})
=
\left\lceil
\frac{7}{9-5}
\right\rceil
=2.
\]
Therefore, \(P^{2}\) also produces a lower boundary vector.

The network contains nine simple source--destination paths. Nevertheless, RALA reaches the destination through only the two potentially feasible ones. Four partial paths are eliminated before reaching node \(5\), and their descendant paths are not constructed.

\subsection{Final State Vectors and Reliability}
\label{subsec:example_results}

Applying \eqref{eq:path_boundary_vector} to \(P^{1}\) and \(P^{2}\) gives
\begin{align}
\boldsymbol{x}^{1}
&=
(0,2,0,0,0,0,2,0),
\label{eq:example_boundary_1}\\
\boldsymbol{x}^{2}
&=
(0,0,2,0,0,0,0,2).
\label{eq:example_boundary_2}
\end{align}
Hence,
\begin{equation}
\Psi^{\min}_{7,9,63}
=
\left\{
(0,2,0,0,0,0,2,0),
(0,0,2,0,0,0,0,2)
\right\}.
\label{eq:example_boundary_set}
\end{equation}

The first vector indicates that arcs \(a_2\) and \(a_7\) must each have a capacity of at least \(2\). Under this state,
\[
f(7,\boldsymbol{x}^{1},P^{1})
=
3+\left\lceil\frac{7}{2}\right\rceil
=7\leq9.
\]
Likewise, the second vector requires capacities of at least \(2\) on \(a_3\) and \(a_8\), and
\[
f(7,\boldsymbol{x}^{2},P^{2})
=
5+\left\lceil\frac{7}{2}\right\rceil
=9.
\]

Let
\[
E_1=\{\boldsymbol{X}\geq\boldsymbol{x}^{1}\}
\quad\text{and}\quad
E_2=\{\boldsymbol{X}\geq\boldsymbol{x}^{2}\}.
\]
From the uniform distributions in
\eqref{eq:example_capacity_distribution},
\begin{align*}
\Pr(X_2\geq2)
&=\frac{1}{3},&
\Pr(X_7\geq2)
&=\frac{3}{5},\\
\Pr(X_3\geq2)
&=\frac{4}{6}=\frac{2}{3},&
\Pr(X_8\geq2)
&=\frac{4}{6}=\frac{2}{3}.
\end{align*}
Consequently,
\begin{align}
\Pr(E_1)
&=
\Pr(X_2\geq2)\Pr(X_7\geq2)
=
\frac{1}{3}\frac{3}{5}
=
\frac{1}{5},
\label{eq:example_event_1}\\
\Pr(E_2)
&=
\Pr(X_3\geq2)\Pr(X_8\geq2)
=
\frac{2}{3}\frac{2}{3}
=
\frac{4}{9}.
\label{eq:example_event_2}
\end{align}

The two events involve disjoint sets of arcs. Under the independence assumption,
\begin{equation}
\Pr(E_1\cap E_2)
=
\Pr(E_1)\Pr(E_2)
=
\frac{1}{5}\frac{4}{9}
=
\frac{4}{45}.
\label{eq:example_intersection}
\end{equation}
Therefore, the exact quickest path reliability is
\begin{align}
R_{7,9,63}
&=
\Pr(E_1\cup E_2)\notag\\
&=
\Pr(E_1)+\Pr(E_2)-\Pr(E_1\cap E_2)\notag\\
&=
\frac{1}{5}+\frac{4}{9}-\frac{4}{45}\notag\\
&=
\frac{5}{9}\notag\\
&\approx0.555556.
\label{eq:example_reliability}
\end{align}

This example demonstrates that the proposed look-ahead procedure preserves all feasible paths while eliminating partial paths whose infeasibility can be established from the resources required to reach the destination. In particular, the branch beginning with \(1\text{--}2\) passes the unrestricted destination-based test but is rejected by the stronger capacity-filtered test. This distinction illustrates the main advantage of the proposed algorithm over a procedure based only on the resources already consumed by a partial path.

\section{Concluding Remarks}
\label{sec:conclusion}

This paper developed a resource-aware look-ahead approach for generating the complete set of lower boundary vectors associated with the quickest path reliability problem under simultaneous time and budget constraints. The approach extends the node--child-matrix search by considering not only the resources already consumed by a partial path but also the resources that will necessarily be required to reach the destination. Shortest-lead-time, least-cost, and widest-capacity information is computed for the remaining network and used to recognize partial paths that cannot lead to a feasible solution.

Two exact variants of the resulting resource-aware look-ahead algorithm were presented. RALA-B applies the basic destination-based lead-time and cost bounds, whereas RALA-C strengthens these tests using capacity-filtered subnetworks. The latter variant accounts more directly for the interaction between the bottleneck capacity of a path, its transmission time, and its cost. A cost--time coupled condition and an iterative arc-reduction procedure were also introduced as optional mechanisms for obtaining additional pruning before or during enumeration.

The theoretical results establish that all the proposed elimination conditions are safe. Whenever a partial path is removed, no extension of that path can satisfy the required demand, time limit, and budget simultaneously. The resulting algorithms therefore generate every $(d,T,b)$ lower boundary vector and do not produce duplicate solutions. Moreover, they operate without requiring the complete collection of minimal paths as an input. The output-sensitive complexity analysis distinguishes the costs associated with preprocessing, partial-path examination, and construction of the final state vectors, thereby clarifying the sources of computational effort in both variants.

The present work concentrates on the exact generation of lower boundary vectors. Once these vectors have been obtained, the quickest path reliability can be evaluated using any suitable exact union-probability method. This separation allows improvements in boundary-vector generation to be incorporated without changing the subsequent reliability-calculation procedure. At the same time, the overall difficulty of an instance continues to depend on the number and structure of the partial paths explored and on the number of lower boundary vectors that must be represented explicitly. The relative benefit of the capacity-aware conditions may also depend on the number of distinct capacity levels and on the tightness of the demand, time, and budget requirements.

The proposed framework suggests several directions for further research. Partial paths belonging to different branches of the search tree can be examined independently, making the method naturally suitable for parallel execution. In addition, the repeated evaluation of resource bounds across multiple partial paths may be accelerated through vectorized representations and batched operations. Parallel and vectorized implementations, combined with suitable workload-balancing strategies, may therefore extend the exact method to larger and more demanding networks. Another promising direction is the development of an adaptive procedure that selects the appropriate look-ahead conditions according to their estimated pruning benefit and evaluation cost. Extensions to related multistate-network models involving additional resource constraints, multiple sources or destinations, and alternative transmission requirements also merit investigation.

\section*{Acknowledgments}
This research received partial support from the São Paulo Research Foundation (FAPESP) under grant 2023/13667-5 and from the National Council for Scientific and Technological Development (CNPq) under grant 303906/2025-1. 

\bibliographystyle{elsarticle-num} 
\bibliography{majidbibliography_updated}

\end{document}